%%%%%%%%%%%%%%%%%%%%%%%%%%%%%%%%%%%%%%%%%%%%%%%%
%%%%%%%%%%%%%%%%%%%%%%%%%%%%%%%%%%%%%%%%%%%%%%%%
\documentclass[11pt]{article}
\usepackage{amssymb}
\usepackage{colortbl}
\usepackage{amsfonts,amsmath, longtable}

\usepackage{comment}

%[section]

%\newtheorem{lemma}[theorem]{Lemma}%[section]

%https://www.overleaf.com/project/61836e1829cee8b694eca7db

\topmargin 0pt      \oddsidemargin 0pt
        \headheight 0pt \headsep 0pt
        \voffset=-0.5cm
        \hoffset=-0.25in
        \textwidth 6.75in
        \textheight 9.25in       % A4 paper
        \marginparwidth 0.0in
        \parskip 5pt plus 1pt   \jot = 1.5ex

        \def\theequation{\thesection.\arabic{equation}}

%%%%%%%%%%%%%%%%%%%%%%%%%%%%%%%%%%%%%%%%%%%%%%%%%

\newcommand{\mF}{{\mathcal F}}
\newcommand{\mR}{{\mathcal R}}
\newcommand{\mH}{{\mathcal H}}

\newcommand{\mD}{{\mathcal D}}

\newcommand{\al}{\alpha}

\newcommand{\vth}{\vartheta}

\newcommand{\mC}{\mathbb C}
\newcommand{\mZ}{\mathbb Z}

\newenvironment{proof}{\par\noindent{\bf Proof.}}{\hfill$\scriptstyle\blacksquare$}
%\newenvironment{proof}{\par\noindent{{\underline{\bf %Proof:}}}}{$\scriptstyle\blacksquare$}
%%%%%%%%%%%%%%%%%%%%%%%%%%%%%%%%%%%%%%%%%%%%%%%%%%%%%%%%%%%%%%%%%%%%%%%%

\def\beq{\begin{equation}}
\def\eq{\end{equation}}
\def\p{\partial}

\newtheorem{theor}{Theorem}%[section]

\def\res{\mathop{\hbox{Res}}\limits}

\begin{document}

\setcounter{page}{1}

\begin{center}

\

\vspace{-0mm}

{\Large{\bf  Supersymmetric generalization of q-deformed}}

\vspace{3mm}

{\Large{\bf long-range  spin chains  of Haldane-Shastry type and trigonometric}}

 \vspace{3mm}

{\Large{\bf ${\rm GL}(N|M)$ solution of associative Yang-Baxter equation}}

 \vspace{15mm}

 {\Large {M. Matushko}}
 %$\,^{\diamond}$
\qquad\quad\quad
 {\Large {A. Zotov}}
 %$\,^{\diamond\,\bullet}$

  \vspace{7mm}

%$\diamond$ --
{\em Steklov Mathematical Institute of Russian
Academy of Sciences,\\ Gubkina str. 8, 119991, Moscow, Russia}

%$*$ -- {\em Center for Advanced Studies, Skoltech, 143026, Moscow, Russia}

%$\bullet$ -- {\em Institute for Theoretical and Mathematical Physics,\\ Lomonosov Moscow State University, Moscow, %119991, Russia}

   \vspace{5mm}

 {\small\rm {e-mails: matushko@mi-ras.ru, zotov@mi-ras.ru}}

\end{center}

\vspace{0mm}

\begin{abstract}
We propose commuting sets of matrix-valued difference operators in terms of trigonometric  ${\rm  GL}(N|M)$-valued $R$-matrices thus providing quantum supersymmetric (and possibly anisotropic) spin Ruijsenaars-Macdonald operators.
 Two types of trigonometric supersymmetric $R$-matrices are used for this purpose. The first is the one related to the affine quantized algebra ${\hat{\mathcal U}}_q({\rm  gl}(N|M))$. The second is a graded version of the standard
 $\mZ_n$-invariant $A_{n-1}$ type $R$-matrix. We show that being properly normalized the latter graded $R$-matrix satisfies the associative Yang-Baxter equation. Next, we discuss construction of long-range spin chains
using the Polychronakos freezing trick.
As a result we obtain a new family of spin chains, which extends the ${\rm  gl}(N|M)$-invariant Haldane-Shastry spin chain to q-deformed case with possible presence of anisotropy.
\end{abstract}

\newpage
%\bigskip
{\small{ \tableofcontents }}

%\bigskip

%\newpage

\section{Introduction}\label{sec1}
\setcounter{equation}{0}

The Haldane-Shastry long-range spin chain \cite{HS1} is a quantum integrable model describing
interactions between $L$ spins distributed equidistantly on a circle. The Hamiltonian is
 \beq\label{HS}
 \begin{array}{c}
  \displaystyle{
 H^{\rm{HS}}=
  \frac{\pi^2}{2}\sum\limits_{i\neq j}^L \frac{1-P_{ij}}{\sin^2(\pi(x_i-x_j))}\,,
  }
 \end{array}
\eq
 where $x_k=k/L$, $k=1,...,L$ and $P_{ij}$ is the permutation operator representing exchange interaction between spins;
 for the ${\rm gl}_n$ model
 it acts on the Hilbert space $\mH=(\mC^n)^{\otimes L}$ by permuting the $i$-th and the $j$-th tensor components.
The Haldane-Shastry model is naturally extended to the supersymmetric case \cite{Haldane,BMB}; the Hamiltonian  (\ref{HS}) keeps the same form but the permutation operator is defined on the graded space $\mH=(\mC^{N|M})^{\otimes L}$ (see (\ref{supertrans}) below).

The non-supersymmetric model has several important generalizations.
The first one provides anisotropic analogues \cite{SeZ,MZ2} of (\ref{HS}) likewise it happens in the short-range spin chains (magnets), where the isotropic XXX model is extended to the anisotropic XXZ and XYZ models. For example, the XXZ version of the Haldane-Shastry model (\ref{HS}) for the ${\rm gl}_2$ case  is
 \beq\label{s0911}
 \begin{array}{c}
  \displaystyle{
 H^{\rm{XXZ}}=
 %\frac{\rm Id}{2}\sum\limits_{i\neq j}^N\frac{\pi^2}{\sin^2(\pi(x_i-x_j))}
 %-
 %}
 %\\ \ \\
 %  \displaystyle{
  -\frac{\pi^2}{2}\sum\limits_{i\neq j}^L
  %\sum\limits_{a=1}^3
  \frac{\cos(\pi(x_i-x_j))(\sigma_1^{(i)}\sigma_1^{(j)}+
  \sigma_2^{(i)}\sigma_2^{(j)})+\sigma_3^{(i)}\sigma_3^{(j)}
   }{\sin^2(\pi(x_i-x_j))} \,,
  }
 \end{array}
\eq
 where $\sigma_\al^{(i)}$ denotes the Pauli $\sigma_\al$ in the $i$-th tensor component.

 The second type of generalization is a q-deformation, which is similar to the relativistic generalization
 of the Calogero-Moser-Sutherland differential operators to the Ruijsenaars-Schneider or Macdonald difference operators.
 For the ${\rm gl}_2$ Haldane-Shastry model (\ref{HS}) the q-deformation was suggested in \cite{Uglov}. Then it was studied
 in \cite{Lam1,Lam2}, where the Hamiltonian was presented in the following form ($q=e^{\pi i\hbar}$):
  \beq\label{s461re0}
\begin{array}{c}
\displaystyle{
 H^{q\rm HS}=-\pi\sum\limits_{k<i}^L \frac{\sin(\pi \hbar)}{\sin \pi(\hbar+x_i-x_k)\sin\pi(\hbar-x_i+x_k)}\, \bar{\mR}_{i-1,i}\dots \bar{\mR}_{k+1,i}C_{k,i}\bar{\mR}_{i,k+1}\dots \bar{\mR}_{i,i-1}
}
\end{array}
\eq
with
\beq\label{s461re01}
  C_{12}=\displaystyle{
 \left(e^{-\imath\pi \hbar} e_{11}\otimes e_{22}-e_{12}\otimes e_{21}-e_{21}\otimes e_{12}+e^{\imath\pi \hbar}e_{22}\otimes e_{11}\right)}
  \eq
  and $\bar{\mR}_{i,j}=\bar\mR^\hbar(x_i-x_j)$ is the quantum $R$-matrix for the affine quantized algebra ${\hat{\mathcal U}}_q({\rm  gl}_2)$,
  see (\ref{2,0n}).
  In (\ref{s461re01}) $e_{ij}$ stands for the standard matrix units.
  % The bar means normalization (\ref{q04}).
  This result was extended in \cite{MZ1,MZ2} to a wide class of ${\rm GL}_n$ $R$-matrices including the elliptic Baxter-Belavin
  $R$-matrix and its trigonometric degenerations. The idea was first to define a set of matrix-valued commuting difference operators $\mD_k$, $k=1,...,L$ generalizing the (scalar) commuting difference Ruijsenaars-Macdonald operators \cite{Ruij,Macd}. The commutativity property $[\mD_i,\mD_j]=0$
  was shown to be equivalent to a set of $R$-matrix identities, which was then proved by analytical methods
  using explicit form of $R$-matrices in the fundamental representation of ${\rm GL}_n$. Next, using the Polychronakos
  freezing trick \cite{Polych,Lam3}, the commuting Hamiltonains of q-deformed Haldane-Shastry type models were found.
  A similar idea will be used in this paper for the case of supersymmetric $R$-matrices.
There are extensions of the Haldane-Shastry model to different root systems \cite{BFG} although we do not discuss them in this paper.

{\bf The goal of the paper} is three-fold. First, we describe two types of supersymmetric trigonometric
$R$-matrices which we deal with. One is the widely known $R$-matrix related to the affine quantized algebra ${\hat{\mathcal U}}_q({\rm  gl}(N|M))$ \cite{BS}. Another one is a supersymmetrization of $\mZ_n$-invariant $A_{n-1}$ type $R$-matrix \cite{Chered2}. We suggest an answer for it and could not find it in the literature although
we believe it is known, since the graded
extensions of trigonometric $R$-matrices were extensively studied \cite{KS}. In order to relate our $R$-matrix to known results we construct a Drinfeld-type twist \cite{Drinfeld}, which transforms it into the ${\hat{\mathcal U}}_q({\rm  gl}(N|M))$ case. For the graded $\mZ_n$-invariant $R$-matrix we also show that (after being properly normalized) it satisfies
the so-called associative Yang-Baxter equation \cite{FK}. This result is obtained as a by-product. At the same time it is important by itself since it extends the classification of trigonometric solutions of the associative Yang-Baxter equation \cite{SchP} to the graded case. Also, solutions of the associative Yang-Baxter equation are used for
different constructions in integrable systems \cite{FK,LOZ14,KrZ,GZ,SeZ,SeZ2} including the above mentioned $R$-matrix identities \cite{MZ1}.

Second, we prove that the main statement of \cite{MZ1} concerning construction
of commuting matrix-valued difference operators is valid for the graded $R$-matrices under consideration.
In this way we obtain two sets (corresponding to two possible choices of $R$-matrices) of commuting graded matrix-valued difference operators. The operators can be viewed as Hamiltonians of the graded (or supersymmetric) spin Ruijsenaars-Schneider model or its anisotropic versions. The spin Ruijsenaars-Schneider model was introduced at
the classical level in \cite{KrichZ} and much progress was achieved in studies of corresponding Poisson structure and its
quantization \cite{ACF}. The supersymmetric version of the quantum trigonometric Ruijsenaars-Schneider model
was introduced in \cite{BDM}. The quantization in \cite{BDM} was based on the usage of Cherednik operators. Our approach is different since our difference operators commute not for any $R$-matrix but only for those which
satisfy a certain set of identities. It is also interesting to compare our results with the classical construction
of the spin Ruijsenaars-Schneider model from \cite{KL}.

Finally, we use the Polychronakos
  freezing trick \cite{Polych,Lam3}
   and obtain supersymmetric generalizations of q-deformed Haldane-Shastry type models.
The first Hamiltonian has the form:
\beq\label{s4611}
\begin{array}{c}
\displaystyle{
H_1 =\sum\limits_{k<i}^L \bar{R}_{i-1,i}\dots \bar{R}_{k+1,i}\bar{R}_{k,i}\bar{F}_{i,k}\bar{R}_{i,k+1}\dots \bar{R}_{i,i-1}\,,
}
\end{array}
\eq
where
${\bar F}^\hbar_{ij}(z)={\partial_z}\bar{R}^\hbar_{ij}(z)$, ${\bar F}_{ij}={\bar F}^\hbar_{ij}(x_i-x_j)$ and
${\bar R}_{ij}={\bar R}^\hbar_{ij}(x_i-x_j)$. It is valid for both graded $R$-matrices which we consider, i.e.
for (\ref{uqsupern}) and (\ref{superRn}).
For example, in the supersymmetric case ${\hat{\mathcal U}}_q({\rm  gl}(1|1))$ related to the $R$-matrix of (\ref{1,1n})
one gets the same expression as in (\ref{s461re0}) but with $C_{12}$ replaced by
\beq\label{Csusy}
  C^{\rm susy}_{12}=
  \displaystyle{
 \left(e^{-\imath\pi \hbar} e_{11}\otimes e_{22}+e_{12}\otimes e_{21}-e_{21}\otimes e_{12}+e^{\imath\pi \hbar}e_{22}\otimes e_{11}+2\cos(\pi \hbar) e_{22}\otimes e_{22}\right)\,,
 }
  \eq
  where $e_{ij}$ are now matrix units in ${\rm End}(\mC^{1|1})$.
  In the non-relativistic limit ($q\rightarrow 1$ or $\hbar\rightarrow 0$) one reproduces
  the supersymmetric Haldane-Shastry Hamiltonian (\ref{HS}) with graded permutation operator $P_{ij}$.
 At the same time the model related to the $R$-matrix (\ref{g1,1n}) in the non-relativistic limit
 provides the ${\rm gl}(1|1)$ graded version of the model (\ref{s0911}):
 \beq\label{q410}
\begin{array}{c}
\displaystyle{
H^{\rm susy XXZ}=\pi\sum_{k<i}^L
\frac{ e_{11}\otimes e_{22}+e_{22}\otimes e_{11}+2 e_{22}\otimes e_{22}
+\cos (\pi(x_i-x_k))(e_{12}\otimes e_{21}-e_{21}\otimes e_{12})}{\sin^2(\pi(x_i-x_k))}\,.
}
\end{array}
\eq

{\bf The paper is organized as follows.} In the next Section the supersymmetric trigonometric $R$-matrices are described. In Section 3 we formulate the main properties of these $R$-matrices
including the associative Yang-Baxter equation (and the quantum Yang-Baxter equation)
for the graded version of $\mZ_n$-invariant $R$-matrix and
the associative Yang-Baxter equation with additional term for
the $R$-matrix of the affine quantized algebra ${\hat{\mathcal U}}_q({\rm gl}(N|M))$. The
twist transformation relating two $R$-matrices is given as well.
In Section 4 the construction of commuting spin (or matrix) difference operators is given.
Finally, in Section 5 we apply the freezing trick to obtain new families of integrable
long-range spin chains based on supersymmetric $R$-matrices. In Appendix A we give a sketch of the proof that the graded version of $\mZ_n$-invariant $R$-matrix
satisfies the associative Yang-Baxter equation. Appendix B contains
the list of explicit expressions for the normalized graded trigonometric $R$-matrices.

\section{Supersymmetric R-matrices}\label{sec2}
\setcounter{equation}{0}

\subsection{Notations}

In the supersymmetric case consider the $\mZ_2$-graded
vector space $\mC^{N|M}=\mC^{N}\oplus\mC^{M}$ with grading
\beq \label{grading}
p_i=\begin{cases}
     0 \text{     for    } 1\le i\le N, \\
     1 \text{     for   } N< i\le N+M.
    \end{cases}
\eq
Denote by $e_{ij}$ the elementary matrix units acting on the space $\mC^{N|M}$. Multiplication in the tensor product  $\mC^{N|M}\otimes\mC^{N|M}$ is defined by
\beq\label{superrules}
(\mathbb{Id}\otimes e_{ij})(e_{kl}\otimes \mathbb{Id})=(-1)^{(p_i+p_j)(p_k+p_l)}e_{kl}\otimes e_{ij}
\eq
and the graded permutation operator acting on $\mC^{N|M}\otimes\mC^{N|M}$  is given by
\beq\label{supertrans}
P_{12}=\sum_{i,j=1}^{N+M} (-1)^{p_j}e_{ij}\otimes e_{ji}\,.
\eq

\subsection{$R$-matrices}
  We consider two types of trigonometric supersymmetric $R$-matrices. The first is related to \\ the \underline{affine quantized algebra ${\hat{\mathcal U}}_q({\rm gl}(N|M))$ \cite{BS}:}
%
%\underline{${\hat{\mathcal U}}_q({\rm gl}(N|M))$ $R$-matrix:}
  \beq\label{uqsuper}
   \begin{array}{c}
   \displaystyle{
  \mR^\hbar_{12}(z)=
   \pi\sum\limits_{a=1}^{N+M} \Big((-1)^{p_a}\cot(\pi z)+\coth(\pi\hbar)\Big)e_{aa}\otimes e_{aa}
 +
  }
  \\ \ \\
   \displaystyle{
 +\frac{\pi}{\sin(\pi\hbar)}\sum\limits_{a\neq b}^{N+M} e_{aa}\otimes e_{bb}+
 \frac{\pi}{\sin(\pi z)}\sum\limits_{a< b}^{N+M}
 \Big( (-1)^{p_b} e_{ab}\otimes e_{ba}\,e^{\pi\imath z}+(-1)^{p_a}e_{ba}\otimes
 e_{ab}\,e^{-\pi\imath z}\Big)\,.
  }
  \end{array}
  \eq
  The normalized version of this $R$-matrix is given in (\ref{uqsupern}). Let us give two simple examples.

    \paragraph{Example ${\hat{\mathcal U}}_q({\rm gl}(2|0))$:}
  \beq\label{2,0}
   \begin{array}{c}
   \displaystyle{
  \mR^\hbar_{12}(z)=
   \pi\Big(\cot(\pi z)+\coth(\pi\hbar)\Big)\Big(e_{11}\otimes e_{11}+e_{22}\otimes e_{22}\Big)+
  }
  \\ \ \\
   \displaystyle{
 +\frac{\pi}{\sin(\pi\hbar)}\Big(e_{11}\otimes e_{22}+e_{22}\otimes e_{11}\Big)+
 \frac{\pi}{\sin(\pi z)}\Big(e_{12}\otimes e_{21}e^{\pi\imath z}+e_{21}\otimes e_{12}e^{-\pi\imath z}\Big)\,.
  }
  \end{array}
  \eq
  This $R$-matrix (more precisely, its normalized version (\ref{2,0n})) provides the q-deformed Haldane-Shastry model (\ref{s461re0})-(\ref{s461re01}).
 The (semiclassical) limit $\hbar\rightarrow 0$ to the classical $r$-matrix
 corresponds to non-relativistic
 limit
  $q=e^{\pi i\hbar}\rightarrow 1$, which leads
 to the ordinary Haldane-Shastry model (\ref{HS}).

 Let us remark that the parameter $\hbar$ in a quantum $R$-matrix plays the role of the Planck constant providing
 the semiclassical limit $\hbar\rightarrow 0$: $\displaystyle R_{12}^\hbar(z)=\frac{1}{\hbar}\,1_{N|M}\otimes 1_{N|M}+r_{12}(z)+O(\hbar)$. At the same time
 it plays the role of the relativistic deformation parameter in the sense that the limit $\hbar\rightarrow 0$ corresponds also to
 transition from Lie group to Lie algebra. For example, quantum $R$-matrices can be used at the classical level for constructing ''relativistic'' integrable systems (those described by quadratic Poisson brackets of the Sklyanin type) with relativistic parameter $\hbar$. A direct relation to the Ruijsenaars-Schneider model with the relativistic parameter $\hbar$ can be established, see \cite{LOZ_rel,KrZ}.

   \paragraph{Example ${\hat{\mathcal U}}_q({\rm gl}(1|1))$:}
\beq\label{1,1}
   \begin{array}{c}
   \displaystyle{
  \mR^\hbar_{12}(z)=
   \pi\Big(\cot(\pi z)+\coth(\pi\hbar)\Big)e_{11}\otimes e_{11}
  +\pi\Big(-\cot(\pi z)+\coth(\pi\hbar)\Big)e_{22}\otimes e_{22}+
  }
  \\ \ \\
   \displaystyle{
 +\frac{\pi}{\sin(\pi\hbar)}\Big(e_{11}\otimes e_{22}+e_{22}\otimes e_{11}\Big)+
 \frac{\pi}{\sin(\pi z)}\Big(-e_{12}\otimes e_{21}e^{\pi\imath z}+e_{21}\otimes e_{12}e^{-\pi\imath z}\Big)\,.
  }
  \end{array}
  \eq
The normalized version (\ref{1,1n}) of the latter $R$-matrix provides the supersymmetric version of q-deformed Haldane-Shastry model (\ref{Csusy}).

The next one is the \underline{graded extension of $\mZ_n$-invariant $R$-matrix:}
   \begin{equation}\label{superR}
   \begin{array}{c}
   \displaystyle{
  R^\hbar_{12}(z)
  =\pi\sum\limits_{a=1}^{N+M}\Big((-1)^{p_a}\cot(\pi z)+\cot(\pi\hbar)\Big) e_{aa}\otimes e_{aa}+
   }
  \\ \ \\
     \displaystyle{
  +\pi\sum\limits_{a\neq c}^{N+M} e_{aa}\otimes e_{cc}\frac{ \exp\Big(\frac{\pi\imath\hbar}{N+M}\Big(2(a-c)-(N+M){\rm sign}(a-c)\Big)\Big)}{\sin(\pi\hbar)}+
  }
  \\ \ \\
   \displaystyle{
 +\pi\sum\limits_{a\neq c}^{N+M} (-1)^{p_c} e_{ac}\otimes e_{ca}\,\frac{ \exp\Big(\frac{\pi\imath z}{N+M}\Big(2(a-c)-(N+M){\rm sign}(a-c)\Big)\Big)}{\sin(\pi z)}\,.
  }
  \end{array}
  \end{equation}
  In the non-graded case ($M=0$) it was introduced in \cite{Chered2} and obtained in the trigonometric limit from the elliptic $R$-matrix in \cite{AHZ}. The graded version (\ref{superR}) is presumably also known but we could not
  find exactly this one in the literature. Therefore, the claim that (\ref{superR}) is a quantum $R$-matrix requires a proof that it satisfies the quantum Yang-Baxter equation and this will be proved in the next Section.
  The normalized version of (\ref{superR}) is given in (\ref{superRn}). Let us write down two simplest examples of non-graded and graded cases.

    \paragraph{Example ${\rm GL}(2|0)$:}
  \beq\label{g2,0}
   \begin{array}{c}
   \displaystyle{
  R^\hbar_{12}(z)=
   \pi\Big(\cot(\pi z)+\coth(\pi\hbar)\Big)\Big(e_{11}\otimes e_{11}+e_{22}\otimes e_{22}\Big)+
  }
  \\ \ \\
   \displaystyle{
 +\frac{\pi}{\sin(\pi\hbar)}\Big(e_{11}\otimes e_{22}+e_{22}\otimes e_{11}\Big)+
 \frac{\pi}{\sin(\pi z)}\Big(e_{12}\otimes e_{21}+e_{21}\otimes e_{12}\Big)\,.
  }
  \end{array}
  \eq
 This $R$-matrix (more precisely, its normalized version (\ref{g2,0n})) provides a q-deformed version of the XXZ anisotropic model (\ref{s0911}).
 Construction of the Hamiltonians will be reviewed in Section \ref{sec5}, see (\ref{s431}) and (\ref{s461})-(\ref{s47}).

    \paragraph{Example ${\rm GL}(1|1)$:}
  \beq\label{g1,1}
   \begin{array}{c}
   \displaystyle{
  R^\hbar_{12}(z)=
   \pi\Big(\cot(\pi z)+\coth(\pi\hbar)\Big)e_{11}\otimes e_{11}
  +\pi\Big(-\cot(\pi z)+\coth(\pi\hbar)\Big)e_{22}\otimes e_{22}+
  }
  \\ \ \\
   \displaystyle{
 +\frac{\pi}{\sin(\pi\hbar)}\Big(e_{11}\otimes e_{22}+e_{22}\otimes e_{11}\Big)+
 \frac{\pi}{\sin(\pi z)}\Big(-e_{12}\otimes e_{21}+e_{21}\otimes e_{12}\Big)\,.
  }
  \end{array}
  \eq
The normalized version of the latter $R$-matrix (\ref{g1,1n})) leads to the graded version of the q-deformed XXZ (anisotropic) model.

\section{Yang-Baxter equations and other properties of $R$-matrices}\label{sec4}
\setcounter{equation}{0}
Any quantum $R$-matrix, by definition,  satisfies the quantum Yang-Baxter equation:
\beq\label{QYB}
\begin{array}{c}
\displaystyle{
    R^{\hbar}_{12}(u)  R^{\hbar}_{13}(u+v) R^{\hbar}_{23}(v) =
      R^{\hbar}_{23}(v) R^{\hbar}_{13}(u+v) R^{\hbar}_{12}(u)\,.
      }
\end{array}\eq
For the $R$-matrix (\ref{uqsuper}) this property is known. Below we prove it for (\ref{superR}).

\subsection{Associative Yang-Baxter equation}
In this subsection we discuss the associative Yang-Baxter equation \cite{FK}:
\beq\label{AYBE}
\begin{array}{c}
    R^{x}_{12}(z_1-z_2) R^{y}_{23}(z_2-z_3) = R^{y}_{13}(z_1-z_3) R^{x-y}_{12}(z_1-z_2) + R^{y-x}_{23}(z_2-z_3) R^{x}_{13}(z_1-z_3)\,.
\end{array}
\eq
The Yang-Baxter equations (\ref{QYB}) and (\ref{AYBE}) have different sets of solutions.
This is easy to see from the scalar (i.e. ${\rm gl}_1$) case. Indeed, the quantum Yang-Baxter equation
(\ref{QYB}) becomes an identity in this case, while (\ref{AYBE}) is a nontrivial
functional equation. It has solution given by the following function\footnote{This is a trigonometric limit of the elliptic Kronecker function $\phi(z)=\frac{\vth'(0)\vth(\hbar+z)}{\vth(z)\vth(\hbar)}$.}:
 \beq\label{a0811}
 \begin{array}{c}
  \displaystyle{
\phi(x,z)=\pi\cot(\pi x)+\pi\cot(\pi z)=\frac{\pi\sin(\pi(x+z))}{\sin(\pi x)\sin(\pi z)}\,.
 }
 \end{array}
 \eq
However, there is a class of $R$-matrices satisfying both Yang-Baxter equations (\ref{QYB}) and (\ref{AYBE}).
It includes the elliptic ${\rm GL}_n$-valued $R$-matrix (in the fundamental representation of ${\rm GL}_n$) and its degenerations. Trigonometric ${\rm GL}_n$ solutions
of (\ref{AYBE}) were classified in \cite{SchP}. The non-graded $R$-matrix (\ref{superR}) with $M=0$ is contained
in this classification (see \cite{KrZ}) as well as more general (the so-called non-standard) trigonometric $R$-matrix found previously in \cite{AHZ}.

As for the $R$-matrix related to ${\hat{\mathcal U}}_q({\rm gl}_n)$, it was mentioned in \cite{KrZ} that
the non-graded $R$-matrix (\ref{uqsuper}) with $M=0$ satisfies the associative Yang-Baxter equation
with additional term:
\beq\label{relAY0}
\begin{array}{c}
\displaystyle{
    R^{x}_{12}(z_1-z_2) R^{y}_{23}(z_2-z_3) - R^{y}_{13}(z_1-z_3) R^{x-y}_{12}(z_1-z_2) - R^{y-x}_{23}(z_2-z_3) R^{x}_{13}(z_1-z_3)=
    }
    \\ \ \\
    \displaystyle{
    =\frac{\pi}{2\cos(\frac{\pi x}{2})\cos(\frac{\pi y}{2})\cos(\frac{\pi (x-y)}{2})}\sum_{a\neq b\neq c\neq a}^{n} e_{aa}\otimes e_{bb}\otimes e_{cc}\,.
    }
\end{array}\eq
The r.h.s. of (\ref{relAY0}) vanishes for $n\leq 2$ so that (\ref{2,0}) satisfies (\ref{AYBE}).
But for $n>2$ the expression on the r.h.s. of (\ref{relAY0}) is nontrivial although  it is independent of
spectral parameters $z_1,z_2,z_3$.

Similar results are valid for supersymmetric $R$-matrices (\ref{uqsuper}) and (\ref{superR}). Notice that
both $R$-matrices (\ref{uqsuper}) and (\ref{superR}) turn into the function $\phi(\hbar,z)$ (\ref{a0811})
when $N+M=1$, so that (\ref{AYBE}) is satisfied in this case.

\begin{theor}\label{th1}
The supersymmetric extension of the $\mZ_n$-invariant $R$-matrix (\ref{superR}) satisfies the associative Yang-Baxter equation (\ref{AYBE}).
\end{theor}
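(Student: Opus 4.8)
Following the strategy of \cite{MZ1}, the plan is to reduce the identity (\ref{AYBE}) for the $R$-matrix (\ref{superR}) to a finite list of scalar functional identities. The first observation is that every monomial occurring in (\ref{superR}) --- namely $e_{aa}\otimes e_{aa}$, $e_{aa}\otimes e_{cc}$ and $e_{ac}\otimes e_{ca}$ --- is parity-even: the total parity $p_i+p_j+p_k+p_l$ of $e_{ij}\otimes e_{kl}$ vanishes mod $2$. In each of the three products entering (\ref{AYBE}) the two $R$-factors occupy overlapping pairs among the three spaces, acting as the identity on the remaining one, so the graded multiplication rule (\ref{superrules}) produces no extra Koszul sign when they are composed. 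Hence the graded equation (\ref{AYBE}) coincides, as a matrix identity in ${\rm Mat}(n,\mC)^{\otimes 3}$ with $n=N+M$, with the same equation written using ordinary matrix products; the only signs present are the explicit $(-1)^{p_a}$, $(-1)^{p_c}$ inside (\ref{superR}). One then expands both sides in the basis $e_{\al\be}\otimes e_{\ga\de}\otimes e_{\mu\nu}$ and, by index conservation, finds that the nonvanishing components fall into a short list of types: the diagonal ones $e_{aa}\otimes e_{bb}\otimes e_{cc}$, the cyclic off-diagonal one $e_{ab}\otimes e_{bc}\otimes e_{ca}$, a few mixed diagonal/off-diagonal types, and their degenerations with index collisions.

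For a representative of each type I would then write out the resulting scalar identity --- a linear relation between products of the three functional building blocks of (\ref{superR}), namely $\pi\big((-1)^{p_a}\cot(\pi z)+\cot(\pi\hbar)\big)$ and the phase factors $\tfrac{\pi}{\sin(\pi\hbar)}\exp\big(\tfrac{\pi\imath\hbar}{n}(2(a-c)-n\,{\rm sign}(a-c))\big)$, $\tfrac{\pi}{\sin(\pi z)}\exp\big(\tfrac{\pi\imath z}{n}(2(a-c)-n\,{\rm sign}(a-c))\big)$ --- carrying the grading signs $(-1)^{p_\bullet}$. The key point is that these signs can be absorbed: since $\cot$ is odd one has $(-1)^{p_a}\cot(\pi z)=\cot\big((-1)^{p_a}\pi z\big)$, and the phase exponents in (\ref{superR}) are arranged so that, sector by sector, reflecting the relevant spectral parameters turns each graded scalar identity into a (possibly argument-reflected) instance of the corresponding non-graded one. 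The non-graded identities are already available: the $M=0$ case of (\ref{superR}) lies inside the trigonometric classification \cite{SchP} and was treated in \cite{AHZ,KrZ}; ultimately they are trigonometric degenerations of the Fay-type addition identity for the Kronecker function $\phi$ of (\ref{a0811}), used together with the elementary relations $\phi(x,-z)=-\phi(-x,z)$ and $\phi(-x,-z)=-\phi(x,z)$. Since both sides of (\ref{AYBE}) are meromorphic in $z_1,z_2,z_3$ and in $x,y$, it is enough to check the component identities on a generic locus.

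The hard part is the bookkeeping in the second step: one must verify, uniformly over the (few, but not entirely mechanical) types, that the explicit grading signs carried by the three $R$-factors combine, within each fixed component, into exactly the sign undone by the spectral-parameter reflections --- in other words, that the grading is ``transparent'' to (\ref{AYBE}). The cyclic component $e_{ab}\otimes e_{bc}\otimes e_{ca}$ with $a,b,c$ pairwise distinct, and the index-collision cases, are where this is least automatic and where the phase exponents of (\ref{superR}) must be tracked with care; these are precisely the computations collected in Appendix A. One should also note that the Drinfeld-type twist relating (\ref{superR}) to the ${\hat{\mathcal U}}_q({\rm gl}(N|M))$ $R$-matrix is of no help for this theorem, since a twist preserves the quantum Yang-Baxter equation but not (\ref{AYBE}): the ${\hat{\mathcal U}}_q({\rm gl}(N|M))$ $R$-matrix satisfies instead the modified relation with a nonzero extra term of the type (\ref{relAY0}) when $n>2$. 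Everything else reduces to the routine verification of the same trigonometric addition formulas already used in the non-graded case.
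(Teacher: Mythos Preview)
Your approach is essentially the same as the paper's: expand the three products of (\ref{AYBE}) in tensor monomials and reduce to a short list of scalar identities for the two building blocks of (\ref{superR}). Your preliminary observation that every monomial in (\ref{superR}) is parity-even, so the graded multiplication rule (\ref{superrules}) contributes no extra Koszul signs when the three factors are composed, is correct and makes the bookkeeping transparent; the paper leaves this implicit in the explicit expansions (\ref{A1})--(\ref{A3}).

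The only real difference is in how you propose to dispatch the resulting scalar identities. You want to absorb the parity signs by spectral-parameter reflections and thereby reduce to the known non-graded case; the paper instead writes down the four identities (\ref{A11})--(\ref{A14}) and checks them directly. Your reflection idea works cleanly for the purely diagonal identity (\ref{A11}), since $f^a(z,x)=\phi(x,(-1)^{p_a}z)$ and the substitution $z\mapsto -z$, $w\mapsto -w$ reduces it to the non-graded Fay identity. But for the mixed identities (\ref{A13}) and (\ref{A14}) the mechanism is not reflection: in each, the two $f$-terms are evaluated at the \emph{same} $z$ (resp.\ paired $z,w$) and the $(-1)^{p_a}\cot(\pi z)$ contributions either cancel outright or produce a global sign matching the explicit $(-1)^{p_c}$ on the right-hand side, so the identity is grading-independent rather than reflected. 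This is a minor imprecision in your description; once corrected, the verification is routine and coincides with the paper's.
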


\begin{theor}\label{th2}
The supersymmetric $R$-matrix (\ref{uqsuper}) of the affine quantized algebra ${\hat{\mathcal U}}_q({\rm gl}(N|M))$ satisfies the associative Yang-Baxter equation (\ref{AYBE}) with additional term:
\beq\label{relAY}
\begin{array}{c}
\displaystyle{
    R^{x}_{12}(z_1-z_2) R^{y}_{23}(z_2-z_3) - R^{y}_{13}(z_1-z_3) R^{x-y}_{12}(z_1-z_2) - R^{y-x}_{23}(z_2-z_3) R^{x}_{13}(z_1-z_3)=
    }
    \\ \ \\
    \displaystyle{
    =\frac{\pi}{2\cos(\frac{\pi x}{2})\cos(\frac{\pi y}{2})\cos(\frac{\pi (x-y)}{2})}\sum_{a\neq b\neq c\neq a}^{N+M} e_{aa}\otimes e_{bb}\otimes e_{cc}\,,
    }
\end{array}\eq
where the right-hand side is independent of the spectral parameters.
For $N+M\leq 2$ (\ref{relAY}) turns into (\ref{AYBE}).
\end{theor}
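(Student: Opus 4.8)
\noindent{\bf Proof strategy.} Write $\Phi^{x,y}(z_1,z_2,z_3)$ for the difference of the two sides of the associative Yang--Baxter equation (\ref{AYBE}) with $R=\mR$ the $R$-matrix (\ref{uqsuper}),
\[\Phi^{x,y}=\mR^{x}_{12}(z_{12})\mR^{y}_{23}(z_{23})-\mR^{y}_{13}(z_{13})\mR^{x-y}_{12}(z_{12})-\mR^{y-x}_{23}(z_{23})\mR^{x}_{13}(z_{13}),\qquad z_{ij}=z_i-z_j,\]
so that the theorem asserts that $\Phi^{x,y}$ equals the spectral-parameter-independent tensor on the right-hand side of (\ref{relAY}). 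The first step is to locate the obstruction. Expanding every product in the graded basis $\{e_{ij}\otimes e_{kl}\otimes e_{mn}\}$ via the multiplication rule (\ref{superrules}), observe that the purely Cartan sector $e_{aa}\otimes e_{bb}\otimes e_{cc}$ carries no signs, since the relevant parities enter only through $p_a+p_a=0$; hence on that sector the computation coincides with the non-graded one. If two of $a,b,c$ agree, then in each of the three products exactly one off-diagonal-free channel contributes, and a short computation with cotangent addition identities gives $0$ (the $(-1)^{p}\cot$-type diagonal pieces cancel between the two terms in which they occur). If $a,b,c$ are pairwise distinct, every factor contributes the coefficient $\pi/\sin(\pi\,\cdot)$ of the relevant $e_{**}\otimes e_{**}$ with unequal indices, so the $e_{aa}\otimes e_{bb}\otimes e_{cc}$-component of $\Phi^{x,y}$ equals $\pi^2\bigl(\frac{1}{\sin\pi x\,\sin\pi y}-\frac{1}{\sin\pi y\,\sin\pi(x-y)}+\frac{1}{\sin\pi(x-y)\,\sin\pi x}\bigr)$, which by the elementary identity $\sin\pi(x-y)-\sin\pi x+\sin\pi y=4\sin\frac{\pi(x-y)}2\sin\frac{\pi x}2\sin\frac{\pi y}2$ is a $z$-independent multiple of $\bigl(\cos\frac{\pi x}2\cos\frac{\pi y}2\cos\frac{\pi(x-y)}2\bigr)^{-1}$ --- precisely the right-hand side of (\ref{relAY}). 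So the extra term is fully accounted for, and it remains to see that everything else cancels.

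Thus the second step is to prove $\Phi^{x,y}=0$ on every basis component containing at least one genuinely off-diagonal generator $e_{ij}$, $i\ne j$. This is the bulk of the argument; I would carry it out along the lines of the proof of the non-graded identity (\ref{relAY0}) in \cite{KrZ}, splitting into cases according to which pairs of tensor slots carry an off-diagonal generator and which external indices coincide, and in each case invoking the trigonometric product/sum identities already used for Theorem~\ref{th1}. The only genuinely new feature relative to $M=0$ is the bookkeeping of signs: those produced by (\ref{superrules}) combine with the explicit parity factors $(-1)^{p_a},(-1)^{p_b}$ of (\ref{uqsuper}) into a total sign around each closed loop of contracted indices, and one checks that this total sign equals $+1$ on every configuration that actually contributes, so the $M=0$ cancellations carry over unchanged.

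A conceptually cleaner route, which I would use at least as a cross-check, deduces Theorem~\ref{th2} from Theorem~\ref{th1} by means of the Drinfeld-type twist $\mathcal F$ constructed in this section, which carries the graded $\mZ_n$-invariant $R$-matrix (\ref{superR}) to the ${\hat{\mathcal U}}_q({\rm gl}(N|M))$ $R$-matrix (\ref{uqsuper}). Since $\mathcal F$ depends on the spectral parameter it is not a genuine $2$-cocycle; substituting $\mR=\mathcal F[R]$, with $R$ the $R$-matrix (\ref{superR}), into $\Phi^{x,y}$ and using the associative Yang--Baxter equation for $R$ (Theorem~\ref{th1}) to eliminate the leading part, one is left exactly with the cocycle defect of $\mathcal F$ --- a purely algebraic expression in the exponential twist factors --- which is again evaluated to the right-hand side of (\ref{relAY}); as before only the sector with three pairwise-distinct Cartan indices contributes.

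In both routes the real obstacle is the same: organising the numerous off-diagonal sectors together with the $\mZ_2$-graded signs and verifying that nothing beyond the stated Cartan tensor survives --- the Cartan-sector computation above is elementary and already fixes the form of the extra term. Several degenerations serve as consistency checks: for $N+M=1$ both $R$-matrices reduce to $\phi(\hbar,z)$ of (\ref{a0811}) and $\Phi^{x,y}\equiv0$; for $N+M=2$ there are no three pairwise-distinct indices, so the right-hand side of (\ref{relAY}) is empty and (\ref{AYBE}) holds (in particular for (\ref{2,0})); and setting $M=0$ recovers (\ref{relAY0}).
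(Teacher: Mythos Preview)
Your proposal is correct and identifies the same key point as the paper --- the extra term arises exclusively from the pairwise-distinct Cartan sector $e_{aa}\otimes e_{bb}\otimes e_{cc}$ via the failure of the single $1/\sin$ identity you wrote down. The paper, however, packages the argument more economically: rather than redoing the off-diagonal case analysis and graded-sign bookkeeping from scratch (as you propose, following \cite{KrZ}), it observes that the $R$-matrix (\ref{uqsuper}) has \emph{exactly the same tensor structure} as (\ref{superR}), with the scalar coefficient functions $g^{ac}(\hbar)$ replaced by $\tilde g(\hbar)=\pi/\sin\pi\hbar$ and $g^{ac}(z)$ by $\tilde g^{ac}(z)$ (see (\ref{gtilde})). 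Hence the expansions (\ref{A1})--(\ref{A3}) and \emph{all} of the graded-sign bookkeeping carry over verbatim from the proof of Theorem~\ref{th1}; one is left only with checking which of the four scalar identities (\ref{A11})--(\ref{A14}) survive for the new functions. The sole failure is (\ref{A12}) applied to $\tilde g$, whose defect is precisely the right-hand side of (\ref{relAY}). This eliminates the need for a separate off-diagonal analysis entirely. Your twist-based alternative is a genuinely different route that the paper does not pursue.
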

The proofs of both theorems are given in Appendix A. The result of
Theorem \ref{th1} extends the classification \cite{SchP} of trigonometric solutions of (\ref{AYBE}) to
the graded case. As we previously mentioned, the non-supersymmetric version of (\ref{superR})
has generalizations satisfying (\ref{AYBE}). For example, it is true for the 7-vertex trigonometric $R$-matrix and its
higher rank analogues.
%However, to our best knowledge, these $R$-%matrices have no supersymmetric
%counterparts (similarly to the elliptic Baxter-%Belavin $R$-matrix).
The question whether these type $R$-matrices have supersymmetric
counterparts deserves further elucidation.
Different applications of the associative Yang-Baxter equation (see \cite{FK,SchP,LOZ14,KrZ,GZ,SeZ,SeZ2}) can be
studied for the graded $R$-matrix (\ref{superR}).

%%%%%%%%%%%%%%%%%%
%%%%%%%%%%%%%%%%%%%
%%%%%%%%%%%%%%%%%%%%%
% %%%%%%%%%%%%%%%%%%%%%%%%%%%

  \subsection{Quantum Yang-Baxter equation}

We first mention that both $R$-matrices (\ref{uqsuper}) and (\ref{superR})
obey the {\em unitarity property}
\beq\label{q03}\begin{array}{c}\displaystyle
    R^{\hbar}_{12}(z) R^\hbar_{21}(-z)= \phi(\hbar,z)\phi(\hbar,-z)\rm{Id}=\Big( \frac{\pi^2}{\sin^2(\pi \hbar)}-
  \frac{\pi^2}{\sin^2(\pi z)}\Big)\rm{Id}\,.
\end{array}\eq
and the {\em skew-symmetry property}
\beq\label{q031}\begin{array}{c}\displaystyle
    R^{-\hbar}_{12}(-z)=-R^\hbar_{21}(z)=-P_{12}R^\hbar_{12}(z)P_{12}\,.
\end{array}\eq
This is simply verified by straightforward calculation.

\begin{theor}\label{th3}
The supersymmetric extension of the $\mZ_n$-invariant $R$-matrix (\ref{superR}) satisfies the quantum
Yang-Baxter equation (\ref{QYB}).
\end{theor}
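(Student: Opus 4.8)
The plan is to derive the quantum Yang-Baxter equation (\ref{QYB}) for the graded $\mZ_n$-invariant $R$-matrix (\ref{superR}) as a consequence of the associative Yang-Baxter equation (\ref{AYBE}) established in Theorem~\ref{th1}, together with the unitarity property (\ref{q03}) and the skew-symmetry property (\ref{q031}). This is the standard mechanism by which solutions of (\ref{AYBE}) that additionally satisfy unitarity automatically solve the quantum Yang-Baxter equation, and it was used for the non-graded $\mZ_n$-invariant $R$-matrix; the point here is to check that the argument survives the $\mZ_2$-grading and the sign conventions (\ref{superrules})--(\ref{supertrans}).

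First I would fix notation and write three copies of (\ref{AYBE}) with a common total spectral parameter, namely with the pairs $(x,y)$ specialized so that the Planck-type parameters telescope: writing (\ref{AYBE}) for the triple $(1,2,3)$ with parameters $(\hbar,\hbar)$, the shifted arguments become $z_1-z_2=u$, $z_2-z_3=v$, $z_1-z_3=u+v$, and the three $R$-factors on each side are $R^\hbar_{12}(u)$, $R^\hbar_{23}(v)$, $R^\hbar_{13}(u+v)$, $R^0_{12}(u)$, $R^0_{23}(v)$. The key auxiliary step is to control $R^0_{ij}$: from the explicit form (\ref{superR}) one checks that the $\hbar\to 0$ behaviour is $R^\hbar_{ij}(z)=\frac{\pi}{\sin(\pi\hbar)}\,P_{ij}+O(1)$ (the graded permutation (\ref{supertrans}) appears because the leading $\sin^{-1}(\pi\hbar)$ coefficient collects all off-diagonal terms with the correct $(-1)^{p_c}$ signs), so $R^0_{ij}$ is, up to normalization, the graded permutation; more precisely one uses that $R^\hbar_{ij}(z)-\frac{\pi}{\sin(\pi\hbar)}P_{ij}$ is regular at $\hbar=0$. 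Then I would multiply the associative equation through by appropriate $R$-matrices with reversed arguments, use unitarity (\ref{q03}) to cancel pairs of the form $R^\hbar_{ij}(z)R^\hbar_{ji}(-z)$, and use skew-symmetry (\ref{q031}) to convert $R^{-\hbar}$ and $R^0$ terms into graded permutations conjugating the remaining factors. Collecting terms, the cubic expression $R^\hbar_{12}(u)R^\hbar_{13}(u+v)R^\hbar_{23}(v)-R^\hbar_{23}(v)R^\hbar_{13}(u+v)R^\hbar_{12}(u)$ should be shown to equal zero identically.

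An alternative, more self-contained route — which I would keep in reserve in case the sign bookkeeping in the derivation above becomes unwieldy — is the Drinfeld-twist argument already advertised in the introduction: the $R$-matrix (\ref{superR}) is related by a diagonal Drinfeld-type twist to the ${\hat{\mathcal U}}_q({\rm gl}(N|M))$ $R$-matrix (\ref{uqsuper}), for which the quantum Yang-Baxter equation is known \cite{BS}. Since a Drinfeld twist preserves solvability of (\ref{QYB}) (the cocycle condition for the twist reduces in the diagonal case to a simple functional identity among the exponential factors $\exp(\frac{\pi\imath\hbar}{N+M}(2(a-c)-(N+M)\,{\rm sign}(a-c)))$), this immediately yields the claim. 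The main obstacle in either approach is purely combinatorial rather than conceptual: tracking the Koszul signs $(-1)^{(p_i+p_j)(p_k+p_l)}$ from (\ref{superrules}) and the $(-1)^{p_c}$ in the off-diagonal entries through the triple products, and verifying that they assemble into the graded permutation consistently on every weight subspace of $(\mC^{N|M})^{\otimes 3}$. I expect this to reduce, after restricting to the action on basis vectors $e_i\otimes e_j\otimes e_k$, to a finite check that splits into the cases "all indices equal", "exactly two equal", and "all distinct", exactly as in the non-graded proof, with the grading contributing only overall signs that cancel in pairs.
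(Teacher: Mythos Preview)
Your main approach is exactly the one the paper uses: it simply invokes the general fact (citing \cite{LOZ14}) that any unitary (\ref{q03}) and skew-symmetric (\ref{q031}) solution of the associative Yang-Baxter equation (\ref{AYBE}) also satisfies the quantum Yang-Baxter equation (\ref{QYB}), and applies this together with Theorem~\ref{th1}. The paper keeps the proof to a few lines and does not spell out the mechanism or mention the twist alternative, but your elaboration and backup route are both consistent with its logic.
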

\begin{proof}
Of course, one can verify the statement by direct computation. But there is a  simpler way to do it
using the statement of Theorem \ref{th1}. It is known (see e.g. Section 4 in \cite{LOZ14}) that
any unitary (\ref{q03}) and skew-symmetric (\ref{q031}) solution of the associative Yang-Baxter equation
 (\ref{AYBE}) is also a solution of the quantum Yang-Baxter equation (\ref{QYB}). The $R$-matrix (\ref{superR})
satisfies (\ref{AYBE}) and obeys the properties (\ref{q03}) and (\ref{q031}). This finishes the proof.
\end{proof}

To summarize, both $R$-matrices (\ref{uqsuper}) and (\ref{superR})  satisfy
the quantum Yang-Baxter equation (\ref{QYB}) and obey the properties (\ref{q03}), (\ref{q031}).

We use two different normalizations of $R$-matrices. The one (\ref{q03}) means that $R$-matrix is normalized
to the function $\phi(\hbar,z)$.
$R$-matrices normalized as in (\ref{q03})
are used in the
the $R$-matrix identities. Another normalization
is the standard one.
The normalized $R$-matrices are defined as follows:
\beq\label{q04}
\begin{array}{c}
\displaystyle{
    {\bar R}^{\hbar}_{12}(z)  = \frac{1}{\phi(\hbar,z)}\,R^{\hbar}_{12}(z)=\frac{1}{\pi\cot(\pi \hbar)+\pi\cot(\pi z)}\, R^{\hbar}_{12}(z)\,.
    }
\end{array}\eq
Then (\ref{q03}) takes the form:
\beq\label{q05}
\begin{array}{c}
    {\bar R}^{\hbar}_{12}(z) {\bar R}^\hbar_{21}(-z)= {\rm Id}\,.
\end{array}
\eq
A list of the normalized $R$-matrices (\ref{uqsuper})-(\ref{g1,1}) is given in Appendix B.

The short notation ${\bar R}_{ij}={\bar R}^{\hbar}_{ij}(z_i-z_j)$ is used for the $R$-matrix acting non-trivially on the $i$-th and $j$-th tensor components of the Hilbert
space $\mH=(\mC^{N|M})^{\otimes L}$. The quantum Yang-Baxter equation (\ref{QYB}) implies
\beq\label{QYB2}
\begin{array}{c}
\displaystyle{
    {\bar R}^{\hbar}_{ij}(z_i-z_j)  {\bar R}^{\hbar}_{ik}(z_i-z_k) {\bar R}^{\hbar}_{jk}(z_j-z_k) =
      {\bar R}^{\hbar}_{jk}(z_j-z_k) {\bar R}^{\hbar}_{ik}(z_i-z_k) {\bar R}^{\hbar}_{ij}(z_i-z_j)
      }
\end{array}\eq
for any distinct integers $1\leq i,j,k\leq N$ and
\beq\label{QYB3}
\begin{array}{c}
\displaystyle{
    [{\bar R}^{\hbar}_{ij}(u), {\bar R}^{\hbar'}_{kl}(v)]=0
      }
\end{array}\eq
for any distinct integers $1\leq i,j,k,l\leq N$.

\subsection{Twist transformation}
Here we describe a relation between the ${\hat{\mathcal U}}_q({\rm gl}(N|M))$ $R$-matrix $\mR^\hbar_{12}(u-v)$ in (\ref{uqsuper}), and the one in (\ref{superR}),  $R^\hbar_{12}(u-v)$.
It is given by the following Drinfeld twist.

\begin{theor}\label{th4}
The relation between the $R$-matrices (\ref{uqsuper})  and (\ref{superR}) is as follows:
\beq\label{b01}
\displaystyle{
R^\hbar_{12}(u-v)=G_1(u)G_2(v)F_{12}(\hbar)\mR^\hbar_{12}(u-v)F_{21}^{-1}(\hbar)G_1^{-1}(u)G_2^{-1}(v),
}
\eq
where $G_1(u)=G(u)\otimes 1_{N+M}$, $G_2(v)=1_{N+M}\otimes G(v)$,
\beq\label{b02}
\displaystyle{
F_{12}(\hbar)=\sum_{i,j=1}^{N+M} \exp\left(\frac{\pi \imath \hbar \left(2(i-j)-(N+M){\rm sign}(i-j)\right)}{2(N+M)}\right)e_{ii}\otimes e_{jj}\,,
}
\eq
\beq\label{b021}
\displaystyle{
{\rm sign}(i-j)=
\left\{
\begin{array}{cc}
1,& i>j\,,
\\
0,& i=j\,,
\\
-1,& i<j
\end{array}
\right.
}
\eq
and
\beq\label{b03}
\displaystyle{
G(u)=\sum_{j=1}^{N+M} \exp\left( \frac{2\pi \imath  (j-1)u}{N+M} \right)e_{jj}\,.
%\quad G_1(u)=G(u)\otimes 1_{N+M}\,,
%\quad G_2(v)=1_{N+M}\otimes G(v)\,.
}
\eq
\end{theor}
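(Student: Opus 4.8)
The plan is to verify the twist relation (\ref{b01}) by direct comparison of the two sides, exploiting the fact that both $R$-matrices in (\ref{uqsuper}) and (\ref{superR}) have the same block structure. First I would observe that the conjugating matrices $F_{12}(\hbar)$, $G_1(u)$, $G_2(v)$ are all diagonal in the natural basis $e_{ij}\otimes e_{kl}$, so conjugation acts diagonally on each of the three types of terms appearing in $\mR^\hbar_{12}$: the ``Cartan'' terms $e_{aa}\otimes e_{aa}$, the off-diagonal Cartan terms $e_{aa}\otimes e_{bb}$ with $a\neq b$, and the permutation-type terms $e_{ab}\otimes e_{ba}$. The key point is that diagonal conjugation fixes $e_{aa}\otimes e_{aa}$ and $e_{aa}\otimes e_{bb}$, so the only nontrivial effect is on the terms $e_{ab}\otimes e_{ba}$ for $a\neq b$.

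The main computation is therefore to track the scalar factor picked up by $e_{ab}\otimes e_{ba}$ under the composite conjugation $X\mapsto G_1(u)G_2(v)F_{12}(\hbar)\,X\,F_{21}^{-1}(\hbar)G_1^{-1}(u)G_2^{-1}(v)$. From (\ref{b03}), conjugation by $G_1(u)G_2(v)$ multiplies $e_{ab}\otimes e_{ba}$ by $\exp\!\big(\tfrac{2\pi\imath(a-b)u}{N+M}\big)\exp\!\big(\tfrac{2\pi\imath(b-a)v}{N+M}\big)=\exp\!\big(\tfrac{2\pi\imath(a-b)(u-v)}{N+M}\big)$, which turns the bare $e^{\pm\pi\imath z}$ phases in (\ref{uqsuper}) (with $z=u-v$) precisely into the $z$-dependent exponentials $\exp\big(\tfrac{\pi\imath z}{N+M}(2(a-b)-(N+M)\,{\rm sign}(a-b))\big)$ of (\ref{superR}); here one checks that $e^{\pi\imath z}$ corresponds to the case $a<b$ (so ${\rm sign}(a-b)=-1$) and $e^{-\pi\imath z}$ to $a>b$. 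Similarly, from (\ref{b02}), conjugation by $F_{12}(\hbar)F_{21}^{-1}(\hbar)$ multiplies $e_{ab}\otimes e_{ba}$ by $\exp\big(\tfrac{\pi\imath\hbar}{N+M}(2(a-b)-(N+M)\,{\rm sign}(a-b))\big)$, which is exactly the factor needed to convert the $\coth(\pi\hbar)$ and $1/\sin(\pi\hbar)$ coefficients of the Cartan block of (\ref{uqsuper})---wait, more precisely, the $F$-conjugation does \emph{not} touch the Cartan block (it is diagonal-on-diagonal) but does modify the $e_{ab}\otimes e_{ba}$ block; one must separately verify that the Cartan terms $\big((-1)^{p_a}\cot(\pi z)+\coth(\pi\hbar)\big)e_{aa}\otimes e_{aa}$ of (\ref{uqsuper}) already match those of (\ref{superR}) only after noting the discrepancy $\coth(\pi\hbar)$ vs.\ $\cot(\pi\hbar)$ and the $e_{aa}\otimes e_{cc}$ coefficient $\tfrac{1}{\sin(\pi\hbar)}$ vs.\ the exponential-weighted one. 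Thus the $F$-twist is responsible for \emph{both} bringing the $e_{ab}\otimes e_{ba}$ block into the form (\ref{superR}) and---this is the subtle part---reshuffling the $\hbar$-dependence of the Cartan block; I would handle the Cartan block by writing $F_{12}(\hbar)\big(\sum_a c_a e_{aa}\otimes e_{aa}+\sum_{a\neq b}d\, e_{aa}\otimes e_{bb}\big)F_{21}^{-1}(\hbar)$ and checking term by term that the phases from $F_{12}F_{21}^{-1}$ on $e_{aa}\otimes e_{bb}$ (which cancel when acting on $e_{aa}\otimes e_{aa}$ since the exponent vanishes at $i=j$) reproduce the ratio between the $1/\sin(\pi\hbar)$ of (\ref{uqsuper}) and the exponential-weighted $1/\sin(\pi\hbar)$ of (\ref{superR}); one also uses the elementary identity $\cot\theta=\coth(\imath\theta)/\imath$ type relations, or rather simply notes $\coth(\pi\hbar)$ and $\cot(\pi\hbar)$ are interchanged by $\hbar\mapsto\imath\hbar$-type rescaling absorbed into the definition---I would double-check the precise normalization constants at this point.

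The step I expect to be the main obstacle is bookkeeping the signs and the grading: because the multiplication rule (\ref{superrules}) carries the factor $(-1)^{(p_i+p_j)(p_k+p_l)}$, one must be careful that the conjugation $X\mapsto AXA^{-1}$ with $A$ a product of graded-tensor operators is computed with the correct sign conventions, and that the $(-1)^{p_a}$, $(-1)^{p_b}$ prefactors on the permutation-type terms in (\ref{uqsuper}) and (\ref{superR}) are preserved (they should be, since the twisting matrices are even/diagonal, hence commute through without extra signs, but this needs to be stated). Once the diagonal conjugation is shown to act as a pure scalar on each basis term $e_{ab}\otimes e_{ba}$ and as the identity on $e_{aa}\otimes e_{bb}$, the verification reduces to the three scalar identities above, each of which is a one-line check. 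I would organize the proof as: (i) reduce to diagonal conjugation acting on the three blocks; (ii) verify the Cartan-Cartan block is unchanged and the Cartan-off-diagonal block acquires exactly the exponential weight of (\ref{superR}) from $F$; (iii) verify the permutation block acquires the combined $z$- and $\hbar$-exponential weights from $G_1G_2$ and $F$ respectively; (iv) conclude by matching with (\ref{superR}) term by term. No deep input is needed beyond the explicit forms (\ref{uqsuper}), (\ref{superR}), (\ref{b02}), (\ref{b03}).
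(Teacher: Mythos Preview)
Your overall strategy---direct term-by-term verification exploiting that $G_1,G_2,F_{12},F_{21}$ are all diagonal---is exactly what the paper means by ``direct computation.'' But your execution contains a genuine error in the analysis of the $F$-twist, and the proposal is internally inconsistent on this point.

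The map $X\mapsto F_{12}(\hbar)\,X\,F_{21}^{-1}(\hbar)$ is \emph{not} a conjugation: $F_{21}\neq F_{12}$, since $F_{12}$ is not symmetric under exchange of the tensor factors. Writing $f_{ij}=\exp\!\bigl(\tfrac{\pi\imath\hbar(2(i-j)-(N+M)\,{\rm sign}(i-j))}{2(N+M)}\bigr)$ and noting $f_{ji}=f_{ij}^{-1}$, one computes
\[
F_{12}(e_{ab}\otimes e_{ba})F_{21}^{-1}=\frac{f_{ab}}{f_{ab}}\,e_{ab}\otimes e_{ba}=e_{ab}\otimes e_{ba}\,,
\qquad
F_{12}(e_{aa}\otimes e_{bb})F_{21}^{-1}=\frac{f_{ab}}{f_{ba}}\,e_{aa}\otimes e_{bb}=f_{ab}^{2}\,e_{aa}\otimes e_{bb}\,.
\]
So the $F$-twist \emph{fixes} the permutation block $e_{ab}\otimes e_{ba}$ and inserts the exponential weight $f_{ab}^{2}$ on the off-diagonal Cartan block $e_{aa}\otimes e_{bb}$---precisely the opposite of what you assert in the second paragraph (``the $F$-conjugation does not touch the Cartan block \ldots\ but does modify the $e_{ab}\otimes e_{ba}$ block''). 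Your later remark about ``the phases from $F_{12}F_{21}^{-1}$ on $e_{aa}\otimes e_{bb}$'' is in fact the correct statement, but it contradicts your own opening claim that ``diagonal conjugation fixes $e_{aa}\otimes e_{aa}$ and $e_{aa}\otimes e_{bb}$.''

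Once this is straightened out, the verification is clean: the $G$-conjugation supplies the $z$-dependent exponential on $e_{ab}\otimes e_{ba}$ (your computation here is correct), the $F$-twist supplies the $\hbar$-dependent exponential $f_{ab}^{2}$ on $e_{aa}\otimes e_{bb}$, matching (\ref{superR}), and the diagonal terms $e_{aa}\otimes e_{aa}$ are untouched by both. This last point also settles your worry about $\coth(\pi\hbar)$ versus $\cot(\pi\hbar)$: since neither twist can alter the diagonal coefficients, the two $R$-matrices must agree there verbatim; the discrepancy you flagged is simply a typo in the paper (compare the examples and the definition of $\phi$ in (\ref{a081})), not something to be repaired by an identity.
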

This statement may be confirmed by direct computation.

Notice that for the case $N+M=2$ the twist matrix $F$ becomes trivial: $F_{12}(\hbar)=1_{N+M}^{\otimes 2}$. That is, the
$R$-matrices (\ref{2,0}) and (\ref{g2,0}) as well as (\ref{1,1}) and (\ref{g1,1}) are related
by a gauge transformation with $G(u)={\rm diag}(1,e^{\pi\imath u})$. Obviously, the relation (\ref{b01})
remains the same for the normalized $R$-matrices from Appendix B.

\section{Graded spin Ruijsenaars-Macdonald operators}\label{sec3}
\setcounter{equation}{0}

Let $p_i$ be the shift operator defined by its action on a (sufficiently smooth) function $f(z_1,\dots ,z_L)$:
\beq\label{p_i}
(p_if)(z_1,z_2,\dots z_L)=\exp\left(-\eta \frac{\partial}{\partial z_i}\right)f(z_1,\dots,z_L)=f(z_1,\dots,z_i-\eta,\dots, z_L)\,.
\eq
Denote by $D_k$ the trigonometric Ruijsenaars-Macdonald operators \cite{Ruij,Macd}:
\begin{equation}\label{Dscalar}
 D_k=\sum\limits_{\substack{|I|=k}}\prod\limits_{\substack{i\in I \\ j\notin I}}\phi(z_j-z_i)\prod_{i\in I}p_{i},\qquad k=1,\dots,L,
\end{equation}
 where the sum is taken over all subsets $I$ of $\{1,\dots,L\}$ of size $k$, and
 \beq\label{a081}
 \begin{array}{c}
  \displaystyle{
\phi(z)\equiv\phi(\hbar,z)\stackrel{(\ref{a0811})}{=}\pi\cot(\pi z)+\pi\cot(\pi \hbar)=\frac{\pi\sin(\pi(z+\hbar))}{\sin(\pi z)\sin(\pi\hbar)}\,.
 }
 \end{array}
 \eq
 The commutativity of operators (\ref{Dscalar}) holds:
 \beq\label{comm}
  \begin{array}{c}
  \displaystyle{
   [D_k,D_l]=0\quad k,l=1,...,L\,.
 }
 \end{array}
 \eq
 Namely, it was shown in  \cite{Ruij} that the commutativity is equivalent to a set of
 relations for the function $\phi$, which can be considered as a set of functional equations.
 For the function (\ref{a0811}) these relations become identities.

Following ideas of \cite{Uglov,Lam2} we introduced in \cite{MZ1} a set of (spin or matrix-valued) difference
 operators. The graded
(spin) Ruijsenaars-Macdonald operators are defined in the same way (but with the graded $R$-matrices):
\beq\label{q10}
\begin{array}{c}
  \displaystyle{
    {\mathcal D}_k=\sum\limits_{1\leq i_1<...<i_k\leq L}\left(\!\prod\limits^{L}_{\hbox{\tiny{$ \begin{array}{c}{ j=1 }\\{ j\!\neq\! i_1...i_{k-1} } \end{array}$}}}\!\phi(z_j-z_{i_1})\ \phi(z_j-z_{i_2})
    \ \cdots\
    \phi(z_j-z_{i_k})\right)\times
    }
    \\ \ \\
    \displaystyle{
    \times\left(
   \overleftarrow{\prod\limits_{j_1=1}^{i_1-1}} \bar{R}_{j_1 i_1}
   \overleftarrow{\prod\limits^{i_2-1}_{\hbox{\tiny{$ \begin{array}{c}{ j_2=1 }\\{ j_2\!\neq\! i_1 } \end{array}$}}}} \bar{R}_{j_2 i_2}
      \ \ldots\
 \overleftarrow{\prod\limits^{i_k-1}_{\hbox{\tiny{$ \begin{array}{c}{ j_k=1 }\\{ j_k\!\neq\! i_1...i_{k-1} } \end{array}$}}}} \bar{R}_{j_k i_k}
  \right)\times
   }
   \\ \ \\
     \displaystyle{
       \times p_{i_1}\cdot p_{i_2}\cdots p_{i_k}\times\left(
   \overrightarrow{\prod\limits^{i_k-1}_{\hbox{\tiny{$ \begin{array}{c}{ j_k\!=\!1 }\\{ j_k\!\neq\! i_{1}...i_{k-1}} \end{array}$}}}}\bar{R}_{i_k j_k}
   \overrightarrow{\prod\limits^{i_{k-1}-1}_{\hbox{\tiny{$ \begin{array}{c}{ j_{k-1}\!=\!1 }\\{ j_{k-1}\!\neq\! i_{1}...i_{k-2}} \end{array}$}}}}\bar{R}_{i_{k-1} j_{k-1}}
      \ \ldots\
  \overrightarrow{\prod\limits^{i_{1}-1}_ {j_1=1}} \bar{R}_{i_{1} j_{1}}\right),
 }
\end{array}\eq
where $k=1,...,L$, $\bar{R}_{ij}=\bar{R}_{ij}^\hbar(z_i-z_j)$.
 The arrows in (\ref{q10}) mean the ordering in the $R$-matrix products. For instance,
$\overrightarrow{\prod\limits^{L}_ {j=1}} R_{ij}=R_{i,1}R_{i,2}...R_{i,L}$ and $\overleftarrow{\prod\limits^{L}_ {j=1}} R_{ji}=R_{L,i}R_{L-1,i}...R_{1,i}$. See \cite{MZ1} for details.

It was shown in \cite{MZ1} that the operators ${\mathcal D}_k$ in (\ref{q10}) commute with each other
 \beq\label{a205}
  \begin{array}{c}
  \displaystyle{
   [{\mathcal D}_m,{\mathcal D}_l]=0\quad m,l=1,...,L
 }
 \end{array}
 \eq
 if and only if the following set of identities holds true:
 \beq\label{a20}
  \begin{array}{c}
  \displaystyle{
   \sum\limits_{1\leq i_1<...<i_k\leq L}
    \Big({\mathcal F}^-_{i_1,...,i_k}(k,L)-{\mathcal F}^+_{i_1,...,i_k}(k,L)\Big)=0\,,
 }
 \end{array}
 \eq
where
 \beq\label{a21}
  \begin{array}{c}
  \displaystyle{
   {\mathcal F}^+_{i_1,...,i_k}(k,L)=
   \overrightarrow{\prod\limits_{l_k=i_k+1}^L} R_{i_k l_k}
   \overrightarrow{\prod\limits^L_{\hbox{\tiny{$ \begin{array}{c}{ l_{k-1}\!=\!i_{k-1}\!+\!1 }\\{ l_{k-1}\!\neq\! i_k } \end{array}$}}}}R_{i_{k-1} l_{k-1}}
      \ \ldots\
 %  \overrightarrow{\prod\limits^N_{\hbox{\tiny{$ \begin{array}{c}{ l_{2}\!=\!i_{2}\!+\!1 }\\{ l_{2}\!\neq\! i_3...i_k } %\end{array}$}}}}R_{i_{2} l_{2}}
   \overrightarrow{\prod\limits^L_{\hbox{\tiny{$ \begin{array}{c}{ l_{1}\!=\!i_{1}\!+\!1 }\\{ l_{1}\!\neq\! i_2...i_k } \end{array}$}}}}R_{i_{1} l_{1}}
   \times
   }
   \end{array}
 \eq
   $$
   \begin{array}{c}
     \displaystyle{
 \times
 \overleftarrow{\prod\limits^L_{\hbox{\tiny{$ \begin{array}{c}{ j_{1}\!=\!1 }\\{ j_{1}\!\neq\! i_1...i_k } \end{array}$}}}}R_{j_1i_1}^-
 \overleftarrow{\prod\limits^L_{\hbox{\tiny{$ \begin{array}{c}{ j_{2}\!=\!1 }\\{ j_{2}\!\neq\! i_1...i_k } \end{array}$}}}}R_{j_2i_2}^-
 \ \ldots\
  \overleftarrow{\prod\limits^L_{\hbox{\tiny{$ \begin{array}{c}{ j_{k}\!=\!1 }\\{ j_{k}\!\neq\! i_1...i_k } \end{array}$}}}}R_{j_ki_k}^-
  \times
    }
   \\ \ \\
     \displaystyle{
       \times
   \overrightarrow{\prod\limits^{i_k-1}_{\hbox{\tiny{$ \begin{array}{c}{ m_k\!=\!1 }\\{ m_k\!\neq\! i_{1}...i_{k-1}} \end{array}$}}}}R_{i_k m_k}
   \overrightarrow{\prod\limits^{i_{k-1}-1}_{\hbox{\tiny{$ \begin{array}{c}{ m_{k-1}\!=\!1 }\\{ m_{k-1}\!\neq\! i_{1}...i_{k-2}} \end{array}$}}}}R_{i_{k-1} m_{k-1}}
      \ \ldots\
  \overrightarrow{\prod\limits^{i_{1}-1}_ {m_1=1}} R_{i_{1} m_{1}}
 }
 \end{array}
 $$
and
 \beq\label{a22}
  \begin{array}{c}
  \displaystyle{
   {\mathcal F}^-_{i_1,...,i_k}(k,L)=
   \overleftarrow{\prod\limits_{m_1=1}^{i_1-1}} R_{m_1 i_1}
   \overleftarrow{\prod\limits^{i_2-1}_{\hbox{\tiny{$ \begin{array}{c}{ m_2=1 }\\{ m_2\!\neq\! i_1 } \end{array}$}}}} R_{m_2 i_2}
      \ \ldots\
 \overleftarrow{\prod\limits^{i_k-1}_{\hbox{\tiny{$ \begin{array}{c}{ m_k=1 }\\{ m_k\!\neq\! i_1...i_{k-1} } \end{array}$}}}} R_{m_k i_k}
   \times
   }
   \end{array}
 \eq
   $$
   \begin{array}{c}
     \displaystyle{
 \times
 \overrightarrow{\prod\limits^L_{\hbox{\tiny{$ \begin{array}{c}{ j_{k}\!=\!1 }\\{ j_{k}\!\neq\! i_1...i_k } \end{array}$}}}}R_{i_k j_k}^-
 \overrightarrow{\prod\limits^L_{\hbox{\tiny{$ \begin{array}{c}{ j_{k-1}\!=\!1 }\\{ j_{k-1}\!\neq\! i_1...i_k } \end{array}$}}}}R_{i_{k-1} j_{k-1}}^-
 \ \ldots\
  \overrightarrow{\prod\limits^L_{\hbox{\tiny{$ \begin{array}{c}{ j_{1}\!=\!1 }\\{ j_{1}\!\neq\! i_1...i_k } \end{array}$}}}}R_{i_1 j_1}^-
  \times
    }
   \\ \ \\
     \displaystyle{
       \times
   \overleftarrow{\prod\limits^{L}_{\hbox{\tiny{$ \begin{array}{c}{ l_1\!=\!i_1\!+\!1 }\\{ l_1\!\neq\! i_{2}...i_{k}} \end{array}$}}}}R_{l_1 i_1}
    \overleftarrow{\prod\limits^{L}_{\hbox{\tiny{$ \begin{array}{c}{ l_2\!=\!i_2\!+\!1 }\\{ l_2\!\neq\! i_{3}...i_{k}} \end{array}$}}}}R_{l_2 i_2}
      \ \ldots\
  \overleftarrow{\prod\limits^{L}_ {l_k=i_k+1}} R_{l_k i_k}\,,
 }
 \end{array}
 $$
and $R_{ij}=R_{ij}^\hbar(z_i-z_j)$, $R_{ij}^-=R_{ij}^\hbar(z_i-z_j-\eta)$. In the scalar case $M=1$, the identities (\ref{a20})
coincide with the identities for $\phi$-functions underlying commutativity of the Ruijsenaars-Macdonald operators
(\ref{Dscalar}).

The expressions  ${\mathcal D}_k$ (\ref{q10}) are  matrix-valued difference operators.
For example,
\beq\label{D1N}
\mathcal{D}_1=\sum_{i=1}^L \Big(\prod\limits_{\substack{j=1\\j\neq i}}^L\phi(z_j-z_i)\Big)\bar{R}_{i-1,i}\bar{R}_{i-2,i}\dots \bar{R}_{1,i}p_i\, \bar{R}_{i,1}\dots \bar{R}_{i,i-2}\bar{R}_{i,i-1}\,.
\eq
That is, ${\mathcal D}_k$ in (\ref{q10}) are ${\rm End}(\mH)$-valued
difference operators.
The commutativity of operators (\ref{q10}) was proved (in the case $M=0$) for the trigonometric ${\hat{\mathcal U}}_q({\rm gl}_2)$ $R$-matrix
 in \cite{Lam2}. In \cite{MZ1} a proof was given based on the identities (\ref{a20}) for the elliptic $R$-matrix (including some trigonometric and rational degenerations). The ${\hat{\mathcal U}}_q({\rm gl}_n)$ $R$-matrix
 was considered separately in \cite{MZ2}.

In order to prove that commutativity of operators (\ref{q10}) holds true for the
graded $R$-matices (\ref{uqsuper}) and (\ref{superR}) we need to prove the identities (\ref{a20}) for these
$R$-matrices. In fact, the proof is  almost the same as in \cite{MZ1,MZ2}.
 Let us give a sketch of the proof. The aim is to show that the l.h.s. of (\ref{a20}) (denote it by $\mathcal F$) is independent of $\eta$. Then we can put $\eta=0$ and the statement becomes simple: $\mathcal F|_{\eta=0}=0$ due to the unitarity property (\ref{q03}) of the $R$-matrix. More precisely, $\mathcal F|_{\eta=0}={\rm Id}\mathcal F_{\rm scalar}$,
 where $\mathcal F_{\rm scalar}$ is the same expression as $\mathcal F$ with $R$-matrices being replaced by $\phi$-functions (\ref{a081}). In this case the statement ($\mathcal F_{\rm scalar}=0$) was proved in \cite{Ruij}.
   In order to prove that  the l.h.s. of (\ref{a20}) is a constant as a function of $\eta$, we make two steps. The first step is to show that $\mF$ has no poles in the variable $\eta=z_i-z_j$. The proof is given  in \cite{MZ1} (see Proposition 4.1) and can be performed in the same way for the graded $R$-matrices with the property $\res_{u=0} R_{12}(u)=P_{12}$, where $P_{12}$ is  the graded permutation. The second step is to show that the function $\mF$ has no poles at $\eta=z_i-z_j+m$, where $m\in\mZ$. For this purpose we use the quasi-periodic property of $R$-matrix (\ref{superR}):
 \beq\label{r72}
 \begin{array}{c}
  \displaystyle{
 R_{12}^\hbar(z+1)=(Q\otimes 1_{N+M})R_{12}^\hbar(z)(Q^{-1}\otimes 1_{N+M})\,,
  }
 \end{array}
 \eq
where
\beq\label{r73}
Q=\sum_{j=1}^{N+M} \exp\left(\frac{2\pi \imath j}{N+M}\right)e_{jj}\,.
\eq
Notice that $Q$ is independent of the spectral parameter $z$. Therefore, we have
\beq
\res\limits_{z_i=z_j+\eta+m}\mF=(Q^m\otimes 1_{N+M})\res\limits_{z_i=z_j+\eta}\mF\ \,(Q^{-m}\otimes 1_{N+M})=0\,.
\eq
The absence of poles at $\eta=z_i-z_j+m$  guarantees that $\mF$ is a constant function of $\eta$, see Appendix C in \cite{MZ2}. This proof works also for the case of the ${\hat{\mathcal U}}_q({\rm gl}(N|M))$-valued $R$-matrix (\ref{uqsuper}), which is periodic: $\mathcal R_{12}^{\hbar} (z+1)=\mathcal R_{12}^{\hbar} (z)$. In fact, the proof in this case repeats
the one from Appendix C in \cite{MZ2} for the non-supersymmetric case. The only difference is that the permutation operator is now graded.

We also mention that the identity (\ref{a20}) for $k=1$ follows
from the associative Yang-Baxter equation (see \cite{MZ1}), so that for the $R$-matrix (\ref{superR}) the relation
(\ref{a20}) with $k=1$ follows from (\ref{AYBE}). Another possible application of the associative Yang-Baxter equation
is in the construction of classical analogues of integrable models related to the presented difference operators.
These are models of interacting tops. The underlying Lax representations and $R$-matrix structures are discussed in  \cite{KrZ,GZ,SeZ,SeZ2}.

\section{Long-range spin chains}\label{sec5}
\setcounter{equation}{0}

Here we derive commuting Hamiltonians of long-range spin chains by applying
the Polychronakos freezing trick \cite{Polych} to the difference operators (\ref{q10}).
The derivation is performed similarly to the one presented in \cite{MZ2}.

\subsection{Hamiltonians of the spin chain}
Consider expansion of the scalar difference operators $D_k$ (\ref{Dscalar}) in the variable $\eta$ near $\eta=0$:
\begin{equation}\label{s29}
\displaystyle{
  D_k=D_k^{[0]}+\eta D_k^{[1]}+\eta^2 D_k^{[2]}+O(\eta^3)\,,\quad k=1,...,L\,.
  }
\end{equation}
Using also the expansion of (\ref{p_i})
\beq\label{s291}
p_i=\sum_{k=0}^m\frac{(-1)^m}{m!}\,\eta^m\frac{\partial^m}{\partial z_i^m}+{O}(\eta^{m+1})\,,
\eq
we  conclude that $D_k^{[m]}$ are some $m$-th order differential operators. For example,
\begin{equation}\label{s30}
\displaystyle{
  D_k^{[0]}=\sum_{|I|=k}\prod\limits_{\substack{i\in I\\j\notin I}}\phi(z_j-z_i)\,,\quad k=1,...,L
  }
\end{equation}
and
\begin{equation}\label{s31}
\displaystyle{
  D_k^{[1]}=-\sum\limits_{m=1}^L\Big(\sum\limits_{\substack{|I|=k\\ m\in I}}\prod\limits_{\substack{i\in I\\j\notin I}}\phi(z_j-z_i)\Big)\p_{z_m}\,.
  %\stackrel{(\ref{s24})}{=}\sum\limits_{m=0}^N c\frac{dz_m}{dt_k}\p_{z_m}
  }
\end{equation}
Consider now a similar expansion of the spin operators (\ref{q10})  $\mD_k$ in variable $\eta$ (near $\eta=0$):
\begin{equation}\label{s34}
\displaystyle{
  \mD_k=\mD_k^{[0]}+\eta \mD_k^{[1]}+\eta^2 \mD_k^{[2]}+O(\eta^3)\,,\quad k=1,...,L\,.
  }
\end{equation}
Since $\mD_k^{[0]}=\mD_k|_{\eta=0}$, due to the unitarity (\ref{q05}), we have
\beq\label{s35}
\mathcal{D}_k^{[0]}={\rm Id}\sum_{|I|=k}\prod\limits_{\substack{i\in I\\j\notin I}}\phi(z_j-z_i)={\rm Id}\,D_k^{[0]}\,,
\eq
where ${\rm Id}=1_{N+M}^{\otimes L}$ is the identity matrix in ${\rm End}(\mH)$.
For the set of $\mD_k^{[1]}$ one gets
\beq\label{s36}
\begin{array}{c}
\displaystyle{
-\mathcal{D}_1^{[1]}={\rm Id}\sum_{i=1}^L \prod\limits_{\substack{j=1\\j\neq i}}^L\phi(z_j-z_i)\frac{\partial}{\partial z_i}+
}
\\
\displaystyle{
+\sum_{i=1}^L \prod\limits_{\substack{j=1\\j\neq i}}^L\phi(z_j-z_i)\sum_{k=1}^{i-1}\bar{R}_{i-1,i}\dots \bar{R}_{k+1,i}\bar{R}_{k,i}\left(\frac{\partial}{\partial z_i}\bar{R}_{i,k}\right)\bar{R}_{i,k+1}\dots \bar{R}_{i,i-1}\,.
}
\end{array}
\eq
In the general case we also have
\beq\label{s39}
\displaystyle{
\mathcal{D}_k^{[1]}={\rm Id}\, D_k^{[1]}-{\tilde H}_k\,,\quad k=1,...,L\,,
}
\eq
where ${\tilde H}_k\in{\rm End}(\mH)$ are some matrix-valued functions, which contain $R$-matrix derivatives but do not contain differential operators.

By restricting the operators ${\tilde H}_k$ to the points $z_k=x_k=k/L$
we obtain the set of Hamiltonians of long-range spin chains:
\begin{equation}\label{s431}
\displaystyle{
 H_i={\tilde H}_i\Big|_{z_k=x_k=k/L}\,,\quad i=1,...,L-1\,.
  }
\end{equation}
Commutativity of these Hamiltonians
\begin{equation}\label{s45}
\displaystyle{
 [H_i,H_j]=0\,,\quad i,j=1,...,L-1
  }
\end{equation}
follows from the set of identities for the function $\phi$:
 \beq \label{t604}
\sum\limits_{\substack{|I|=k\\ l\in I}}\prod\limits_{\substack{i\in I\\j\notin I}}\phi(x_j-x_i)=\sum\limits_{\substack{|I'|=k\\ m\in I'}}\prod\limits_{\substack{i\in I'\\j\notin I'}}\phi(x_j-x_i) \qquad \text{for  } l,m=1\dots N\,,
\eq
and for any $k=1,...,L$. See \cite{MZ2} for the proof in the elliptic case. This proof is the same for the
trigonometric case.
The set of equidistant points $x_k$ is an equilibrium position in the underlying classical spinless model which is the trigonometric (spinless) Ruijsenaars-Schneider model. Moreover, it is the equilibrium position for all
flows of this model.

%\subsection{Examples}
Introduce the following compact notations:
\beq\label{s4601}
\begin{array}{c}
\displaystyle{
{\bar F}^\hbar_{ij}(z)=\frac{\partial}{\partial z}\bar{R}^\hbar_{ij}(z)\,,\qquad
{\bar R}_{ij}={\bar R}^\hbar_{ij}(x_i-x_j)
}
\end{array}
\eq
and
\beq\label{s4602}
\begin{array}{c}
\displaystyle{
{\bar F}_{ij}={\bar F}^\hbar_{ij}(x_i-x_j)\,.
}
\end{array}
\eq
Let us write down expressions for the first two Hamiltonians.
\paragraph{Example. The first Hamiltonian:}
\beq\label{s461}
\begin{array}{c}
\displaystyle{
%{\bf H}_1=\left(\prod\limits_{\substack{j:j\neq i}}^N\phi(x_j-x_i)\right)^{-1}
H_1 =\sum\limits_{k<i}^L \bar{R}_{i-1,i}\dots \bar{R}_{k+1,i}\bar{R}_{k,i}\bar{F}_{i,k}\bar{R}_{i,k+1}\dots \bar{R}_{i,i-1}\,.
}
\end{array}
\eq
\paragraph{Example. The second Hamiltonian:}
\beq\notag
\begin{array}{c}
\displaystyle{
H_2=\sum\limits_{\substack{m,l=1\\m<l}}^L \prod\limits_{\substack{j=1\\j\neq m,l}}^L\phi(x_j-x_m)\phi(x_j-x_l)\times
}
\\
\displaystyle{
\times\left(\sum_{i=1}^{m-1}\bar{R}_{m-1,m}\dots \bar{R}_{i+1,m}\bar{R}_{i,m}\bar{F}_{m,i}\bar{R}_{m,i+1}\dots \bar{R}_{m,m-1}+\right.
}
\end{array}
\eq
\beq\label{s47}
\begin{array}{c}
\displaystyle{
+\sum_{i=1}^{m-1}\bar{R}_{m-1,m}\dots \bar{R}_{1,m}\bar{R}_{l-1,l}\dots \bar{R}_{m+1,l}\bar{R}_{m-1,l}\dots\bar{R}_{i+1,l}\bar{R}_{i,l}\times}
\\ \ \\
{\displaystyle\times\bar{F}_{l,i}\bar{R}_{l,i+1}\dots\bar{R}_{l,m-1}\bar{R}_{l,m+1}\dots \bar{R}_{l,l-1}\bar{R}_{m,1}\dots \bar{R}_{m,m-1}+
}
\end{array}
\eq
\beq\notag
\begin{array}{c}
\displaystyle{
\left.+\sum_{i=m+1}^{l-1}\bar{R}_{l-1,l}\dots\bar{R}_{i+1,l}\bar{R}_{i,l}\bar{F}_{l,i}\bar{R}_{l,i+1}\dots \bar{R}_{l,l-1}\right)\,.
}
\end{array}
\eq

\subsection{Supersymmetric q-deformed Haldane-Shastry Hamiltonian}
For the normalized $R$-matrix (\ref{1,1n}), direct calculation provides
the following relation:
\beq
\label{1,11}
   \begin{array}{c}
   \displaystyle{
  \bar{R}^\hbar_{12}(v-u)\bar{F}_{21}^\hbar(u-v)=
   - \pi\frac{\sin(\pi \hbar)}{\sin \pi(\hbar+u-v)\sin\pi(\hbar-u+v)}\, C^{\rm susy}_{12}\,,
  }
  \end{array}
\eq
where $C^{\rm susy}_{12}$ does not depend on $u,v$:
\beq
  C^{\rm susy}_{12}=\displaystyle{
 \left(e^{-\imath\pi \hbar} e_{11}\otimes e_{22}+e_{12}\otimes e_{21}-e_{21}\otimes e_{12}+e^{\imath\pi \hbar}e_{22}\otimes e_{11}+2\cos(\pi \hbar) e_{22}\otimes e_{22}\right)}\,.
  \eq
Then the Hamiltonian (\ref{s461}) takes the form:
  \beq\label{s461re}
\begin{array}{c}
\displaystyle{
{ H}_1=-\pi\sum\limits_{k<i}^L \frac{\sin(\pi \hbar)}{\sin \pi(\hbar+x_i-x_k)\sin\pi(\hbar-x_i+x_k)}\, \bar{R}_{i-1,i}\dots \bar{R}_{k+1,i}C^{\rm susy}_{k,i}\bar{R}_{i,k+1}\dots \bar{R}_{i,i-1}\,;
}
\end{array}
\eq
that is, it has the same form as the non-supersymmetric one (\ref{s461re0}) but
the matrix $C_{12}$ (\ref{s461re01}) is replaced with $C^{\rm susy}_{12}$ (\ref{Csusy}).

\subsection{Non-relativistic limit to the graded Haldane-Shastry model}
The non-relativistic limit
is the limit $\hbar\to 0$ (see remark between (\ref{2,0}) and (\ref{1,1})). For example, it reproduces the Haldane-Shastry model (\ref{HS})
from the q-deformed case (\ref{s461re0})-(\ref{s461re01}) just as the Calogero-Sutherland model is obtained
from the Ruijsenaars-Schneider one.
 For the graded case in this limit we have
  $\bar{R}^\hbar_{ij}\to {\rm Id}$ and  $C^{\rm susy}_{ij}\to (1-P_{ij})$. Thus,
\beq\label{q400}
\lim_{\hbar\to 0} \frac{1}{\hbar}{ H}_1=\pi^2\sum\limits_{k<i}^L
\frac{1}{\sin^2(\pi(x_i-x_k))}\,(1-P_{ik})\,,
\eq
which coincides with the Hamiltonian of (the graded) Haldane-Shastry spin chain (\ref{HS}).

Consider the $R$-matrix (\ref{g1,1n}). Upon substitution into
(\ref{s461}) it provides the q-deformed supersymmetric
version of the anisotropic model (\ref{s0911}). By taking the non-relativistic limit
 we obtain the anisotropic spin chain with the following Hamiltonian:
\beq\label{q41}
\begin{array}{c}
\displaystyle{
\lim_{\hbar\to 0} \frac{1}{\hbar}{ H}_1=\pi^2\sum\limits_{k<i}^L
\frac{1}{\sin^2(\pi(x_i-x_k))}\,\Big(e_{11}\otimes e_{22}+e_{22}\otimes e_{11}+2 e_{22}\otimes e_{22}\Big)+
}
\\ \ \\
\displaystyle{
+\pi^2\sum_{k<i}^L
\frac{\cos (\pi(x_i-x_k))}{\sin^2(\pi(x_i-x_k))}\,\Big(e_{12}\otimes e_{21}-e_{21}\otimes e_{12}\Big)\,.
}
\end{array}
\eq

%\newpage

\section{Appendix A: proof of the associative Yang-Baxter equation}
\setcounter{equation}{0}
\def\theequation{A.\arabic{equation}}
%
%In this Section we prove the associative Yang-Baxter equation for  (\ref{superR}).
\paragraph{Proof of Theorem \ref{th1}.}
Introduce the following notations
\beq
f^a(z,\hbar)=\pi\Big( (-1)^{p_a}\cot(\pi z)+\cot(\pi\hbar)\Big)\,,
\eq
\beq
g^{ac}(z)=\frac{\pi}{\sin(\pi z)} \exp\Big(\frac{\pi\imath z}{N+M}\Big(2(a-c)-(N+M){\rm sign}(a-c)\Big)\Big)\,.
\eq
 Then the $R$-matrix (\ref{superR}) takes the form:
\beq
  R^\hbar_{12}(z)
  =\sum\limits_a f^a(z,\hbar) e_{aa}\otimes e_{aa}
  +\sum\limits_{a\neq c} g^{ac}(\hbar) e_{aa}\otimes e_{cc}
 +\sum\limits_{a\neq c} (-1)^{p_c} g^{ac}(z) e_{ac}\otimes e_{ca}.
\eq
For the left-hand side of (\ref{AYBE}) we have:
\beq\label{A1}
\begin{array}{c}
\displaystyle{
 R^{x}_{12}(z_{12}) R^{y}_{23}(z_{23})=\sum\limits_a f^a(z_{12},x)f^{a}(z_{23},y) e_{aa}\otimes e_{aa}\otimes e_{aa}+
  }
      \\ \ \\
      \displaystyle{
 +\sum\limits_{a\neq c} f^a(z_{12},x) g^{ac}(y) e_{aa}\otimes e_{aa}\otimes e_{cc}
     + \sum\limits_{a\neq c}(-1)^{p_c} f^a(z_{12},x) g^{ac}(z_{23}) e_{aa}\otimes e_{ac}\otimes e_{ca}
      }
      \\ \ \\
      \displaystyle{
      + \sum\limits_{a\neq c} g^{ac}(x)f^c(z_{23},y) e_{aa}\otimes e_{cc}\otimes e_{cc}+\sum\limits_{\substack{a\neq c \\ b\neq c}} g^{ac}(x)g^{cb}(y) e_{aa}\otimes e_{cc}\otimes e_{bb}
       }
\end{array}
\eq
$$
\begin{array}{c}
      \displaystyle{
      + \sum\limits_{\substack{a\neq c \\ b\neq c}} (-1)^{p_b}g^{ac}(x)g^{cb}(z_{23}) e_{aa}\otimes e_{cb}\otimes e_{bc}+
     + \sum\limits_{a\neq c}(-1)^{p_c} g^{ac}(z_{12})f^a(z_{23},y) e_{ac}\otimes e_{ca}\otimes e_{aa}
      }
      \\ \ \\
      \displaystyle{
      +\sum\limits_{\substack{a\neq c \\ b\neq a}}(-1)^{p_c} g^{ac}(z_{12})g^{ab}(y) e_{ac}\otimes e_{ca}\otimes e_{bb}
      +\sum\limits_{\substack{a\neq c \\ b\neq a}}(-1)^{p_c+p_b} g^{ac}(z_{12})g^{ab}(z_{23}) e_{ac}\otimes e_{cb}\otimes e_{ba}.
     }
\end{array}
$$
Next, write down the first item on right-hand side of (\ref{AYBE}):
\beq\label{A2}
\begin{array}{c}
\displaystyle{
  R^{y}_{13}(z_{13}) R^{x-y}_{12}(z_{12})=\sum\limits_a f^a(z_{13},y)f^{a}(z_{12},x-y) e_{aa}\otimes e_{aa}\otimes e_{aa}+
  }
  \\ \ \\
      \displaystyle{
  +\sum\limits_{a\neq c} f^a(z_{13},y) g^{ac}(x-y) e_{aa}\otimes e_{cc}\otimes e_{aa}
     + \sum\limits_{a\neq c}(-1)^{p_c} f^a(z_{13},y) g^{ac}(z_{12}) e_{ac}\otimes e_{ca}\otimes e_{aa}
     }
       \\ \ \\
      \displaystyle{
     +\sum\limits_{a\neq c}g^{ac}(y) f^a(z_{12},x-y) e_{aa}\otimes e_{aa}\otimes e_{cc}
     +\sum\limits_{\substack{a\neq c \\ a\neq b}}g^{ac}(y) g^{ab}(x-y) e_{aa}\otimes e_{bb}\otimes e_{cc}
     }
\end{array}
\eq
$$
\begin{array}{c}
      \displaystyle{
     +\sum\limits_{\substack{a\neq c \\ a\neq b}}(-1)^{p_b} g^{ac}(y) g^{ab}(z_{12}) e_{ab}\otimes e_{ba}\otimes e_{cc}
     +\sum\limits_{a\neq c}(-1)^{p_c}g^{ac}(z_{13}) f^c(z_{12},x-y) e_{ac}\otimes e_{cc}\otimes e_{ca}
     }
       \\ \ \\
      \displaystyle{
     +\sum\limits_{\substack{a\neq c \\ c\neq b}}(-1)^{p_c} g^{ac}(z_{13}) g^{cb}(x-y)e_{ac}\otimes e_{bb}\otimes e_{ca}
    +\sum\limits_{\substack{a\neq c \\ b\neq c}}(-1)^{p_c+p_b} g^{ac}(z_{13}) g^{cb}(z_{12})e_{ab}\otimes e_{bc}\otimes e_{ca}.
      }
\end{array}
$$
Finally, the second item on the right-hand side of (\ref{AYBE}) is as follows:
\beq\label{A3}
\begin{array}{c}
\displaystyle{
   R^{y-x}_{23}(z_{23}) R^{x}_{13}(z_{13})=\sum\limits_a f^a(z_{23},y-x)f^{a}(z_{13},x) e_{aa}\otimes e_{aa}\otimes e_{aa}+
   }
      \\ \ \\
      \displaystyle{
   +\sum\limits_{a\neq c} f^a(z_{23},y-x) g^{ca}(x) e_{cc}\otimes e_{aa}\otimes e_{aa}
    +\sum \limits_{a\neq c} (-1)^{p_a} f^a(z_{23},y-x) g^{ca}(z_{13}) e_{ca}\otimes e_{aa}\otimes e_{ac}
     }
\end{array}
\eq
$$
\begin{array}{c}
      \displaystyle{
   +\sum\limits_{a\neq c} g^{ac}(y-x) f^{c}(z_{13},x) e_{cc}\otimes e_{aa}\otimes e_{cc}+\sum\limits_{\substack{a\neq c \\ b\neq c}} g^{ac}(y-x) g^{bc}(x) e_{bb}\otimes e_{aa}\otimes e_{cc}
    }
        \\ \ \\
      \displaystyle{
    +\sum\limits_{\substack{a\neq c \\ b\neq c}}(-1)^{p_c} g^{ac}(y-x) g^{bc}(z_{13}) e_{bc}\otimes e_{aa}\otimes e_{cb}+\sum\limits_{a\neq c}(-1)^{p_c} g^{ac}(z_{23}) f^{a}(z_{13},x) e_{aa}\otimes e_{ac}\otimes e_{ca}
    }
    \\ \ \\
      \displaystyle{
      +\sum\limits_{\substack{a\neq c \\ b\neq c}}(-1)^{p_c} g^{ac}(z_{23}) g^{bc}(x) e_{bb}\otimes e_{ac}\otimes e_{ca}+\sum\limits_{\substack{a\neq c \\ b\neq a}}(-1)^{p_c+p_a} g^{ac}(z_{23}) g^{ba}(z_{13}) e_{ba}\otimes e_{ac}\otimes e_{cb}.
      }
\end{array}
$$
Using the explicit expressions (\ref{A1}), (\ref{A2}) and (\ref{A3}) one can collect the coefficients
 multiplying the same tensor monomials  and prove (\ref{AYBE}) using the following relations:
 \beq\label{A11}
 f^a(z,x)f^a(w,y)-f^a(z+w,y)f^a(z,x-y)-f^a(w,y-x)f^a(z+w,x)=0\,,
 \eq
\beq\label{A12}
g^{ab}(x)g^{bc}(y)-g^{ac}(y)g^{ab}(x-y)-g^{bc}(y-x)g^{ac}(x)=0\,,
\eq
\beq\label{A13}
g^{ac}(x+y)\left(f^a(z,x)-f^a(z,-y)\right)=g^{ac}(x)g^{ac}(y)\,,
\eq
\beq\label{A14}
g^{ac}(z+w)\left(f^c(z,x)+f^c(w,-x)\right)=(-1)^{p_c}g^{ac}(z)g^{ac}(w)\,.
\eq
In this way one gets (\ref{AYBE}).

\paragraph{Proof of Theorem \ref{th2}.}
In the case of the $R$-matrix (\ref{uqsuper}) the associative Yang-Baxter equation does not hold true. In this case we have
\beq
   \displaystyle{
  R^\hbar_{12}(z)
  =\sum\limits_a f^a(z,\hbar) e_{aa}\otimes e_{aa}+
  \tilde{g}(\hbar)\sum\limits_{a\neq c}  e_{aa}\otimes e_{cc}
 +\sum\limits_{a\neq c} {\tilde{g}}^{ac}(z) e_{ac}\otimes e_{ca}\,,
 }
\eq
where
\beq\label{gtilde}
   \displaystyle{
\tilde{g}(\hbar)=\frac{\pi}{\sin(\pi \hbar)}\,, \quad \quad \quad
\tilde{g}^{ac}(z)=\frac{(-1)^{p_b}\pi{\rm sign}(c-a)}{\sin(\pi z)}\,.
}
\eq
For ${\tilde g}^{ac}(z)$ the relations (\ref{A11}), (\ref{A12}), (\ref{A14}) hold. At the
same time, for $\tilde g(\hbar)$ the relation (\ref{A13}) is true, while (\ref{A12}) is replaced with
\beq
   \displaystyle{
\tilde g(x)\tilde g(y)-\tilde g(y)\tilde g(x-y)-\tilde g(y-x)\tilde g(x)=\frac{1}{2\cos\frac{\pi x}{2}\cos\frac{\pi y}{2}\cos\frac{\pi(x-y)}{2}}\,.
}
\eq
This gives  (\ref{relAY}).

\section{Appendix B: normalized $R$-matrices}
\setcounter{equation}{0}
\def\theequation{B.\arabic{equation}}

Here we list the normalized versions of $R$-matrices (\ref{uqsuper})-(\ref{g1,1}). The normalization
(\ref{q04})-(\ref{q05}) gives the $R$-matrices with bars, which are used in the expressions
for commuting difference operators and Hamiltonians of spin chains.

 The normalized version of the $R$-matrix related to ${\hat{\mathcal U}}_q({\rm gl}(N|M))$ (\ref{uqsuper}):
%
%\underline{${\hat{\mathcal U}}_q({\rm gl}(N|M))$ $R$-matrix:}
  \beq\label{uqsupern}
   \begin{array}{c}
   \displaystyle{
  {\bar\mR}^\hbar_{12}(z)=
   \sum\limits_{a=1}^{N+M} \frac{\sin(\pi(\hbar+(-1)^{p_a}z))}{\sin(\pi(\hbar+z))}\,e_{aa}\otimes e_{aa}
+\frac{\sin(\pi z)}{\sin(\pi (\hbar+z))}\sum\limits_{a\neq b}^{N+M} e_{aa}\otimes e_{bb}+
  }
  \\ \ \\
   \displaystyle{
+
 \frac{\sin(\pi \hbar)}{\sin(\pi (\hbar+z))}\sum\limits_{a< b}^{N+M}
 \Big( (-1)^{p_b} e_{ab}\otimes e_{ba}\,e^{\pi\imath z}+(-1)^{p_a}e_{ba}\otimes
 e_{ab}\,e^{-\pi\imath z}\Big)\,.
  }
  \end{array}
  \eq
  Two simplest examples below are the normalized $R$-matrices (\ref{2,0}) and (\ref{1,1}) respectively.
    \paragraph{Example ${\hat{\mathcal U}}_q({\rm gl}(2|0))$:}
  \beq\label{2,0n}
   \begin{array}{c}
   \displaystyle{
  {\bar\mR}^\hbar_{12}(z)=
   e_{11}\otimes e_{11}+e_{22}\otimes e_{22}+
  }
  \\ \ \\
   \displaystyle{
 +\frac{\sin(\pi z)}{\sin(\pi (\hbar+z))}\Big(e_{11}\otimes e_{22}+e_{22}\otimes e_{11}\Big)+
 \frac{\sin(\pi \hbar)}{\sin(\pi (\hbar+z))}
 \Big(e_{12}\otimes e_{21}e^{\pi\imath z}+e_{21}\otimes e_{12}e^{-\pi\imath z}\Big)\,.
  }
  \end{array}
  \eq
   \paragraph{Example ${\hat{\mathcal U}}_q({\rm gl}(1|1))$:}
\beq\label{1,1n}
   \begin{array}{c}
   \displaystyle{
  {\bar\mR}^\hbar_{12}(z)=
   e_{11}\otimes e_{11}
  +\frac{\sin(\pi (\hbar-z))}{\sin(\pi (\hbar+z))}\,e_{22}\otimes e_{22}+
  }
  \\ \ \\
   \displaystyle{
 +\frac{\sin(\pi z)}{\sin(\pi (\hbar+z))}\Big(e_{11}\otimes e_{22}+e_{22}\otimes e_{11}\Big)+
 \frac{\sin(\pi \hbar)}{\sin(\pi (\hbar+z))}\Big(-e_{12}\otimes e_{21}e^{\pi\imath z}+e_{21}\otimes e_{12}e^{-\pi\imath z}\Big)\,.
  }
  \end{array}
  \eq

The normalized version for the graded extension of the $\mZ_n$-invariant $R$-matrix:
   \begin{equation}\label{superRn}
   \begin{array}{c}
   \displaystyle{
  {\bar R}^\hbar_{12}(z)
  =\sum\limits_{a=1}^{N+M}\frac{\sin(\pi(\hbar+(-1)^{p_a}z))}{\sin(\pi(\hbar+z))}\, e_{aa}\otimes e_{aa}+
   }
  \\ \ \\
     \displaystyle{
  +\sum\limits_{a\neq c}^{N+M} e_{aa}\otimes e_{cc} \exp\Big(\frac{\pi\imath\hbar}{N+M}\Big(2(a-c)-(N+M){\rm sign}(a-c)\Big)\Big)\frac{\sin(\pi z)}{\sin(\pi (\hbar+z))}+
  }
  \\ \ \\
   \displaystyle{
 +\sum\limits_{a\neq c}^{N+M} (-1)^{p_c} e_{ac}\otimes e_{ca} \exp\Big(\frac{\pi\imath z}{N+M}\Big(2(a-c)-(N+M){\rm sign}(a-c)\Big)\Big)\frac{\sin(\pi \hbar)}{\sin(\pi (\hbar+z))}\,.
  }
  \end{array}
  \end{equation}
   The following two examples are the normalized $R$-matrices (\ref{2,0}) and (\ref{1,1}) respectively.
    \paragraph{Example ${\rm GL}(2|0)$:}
  \beq\label{g2,0n}
   \begin{array}{c}
   \displaystyle{
  {\bar R}^\hbar_{12}(z)=
   e_{11}\otimes e_{11}+e_{22}\otimes e_{22}+
  }
  \\ \ \\
   \displaystyle{
 +\frac{\sin(\pi z)}{\sin(\pi (\hbar+z))}\Big(e_{11}\otimes e_{22}+e_{22}\otimes e_{11}\Big)+
 \frac{\sin(\pi \hbar)}{\sin(\pi (\hbar+z))}\Big(e_{12}\otimes e_{21}+e_{21}\otimes e_{12}\Big)\,.
  }
  \end{array}
  \eq
    \paragraph{Example ${\rm GL}(1|1)$:}
  \beq\label{g1,1n}
   \begin{array}{c}
   \displaystyle{
  {\bar R}^\hbar_{12}(z)=
   e_{11}\otimes e_{11}
  +\frac{\sin(\pi (\hbar-z))}{\sin(\pi (\hbar+z))}\,e_{22}\otimes e_{22}+
  }
  \\ \ \\
   \displaystyle{
 +\frac{\sin(\pi z)}{\sin(\pi (\hbar+z))}\Big(e_{11}\otimes e_{22}+e_{22}\otimes e_{11}\Big)+
 \frac{\sin(\pi \hbar)}{\sin(\pi (\hbar+z))}\Big(-e_{12}\otimes e_{21}+e_{21}\otimes e_{12}\Big)\,.
  }
  \end{array}
  \eq
%

%%%%%%%%%%%%%%%%%%%%%%%%%%%%%%%%%%%%%%%%%%%%%%%%%%%%%%%%%%%%%%%%%%%%%%%%%%%%%%%%%%%%%%%%%%%%%%%%%%%%%%%%%%
%%%%%%%%%%%%%%%%%%%%%%%%%%%%%%%%%%%%%%%%%%%%%%%%%%%%%%%%%%%%%%%%%%%%%%%%%%%%%%%%%%%%%%%%%%%%%%
%%%%%%%%%%%%%%%%%%%%%%%%%%%%%%%%%%%%%%%%%%%%%%%%%%%%%%%%%%%%%%%%%%%%%%%%%%%%%%%%%%%%%%%%%%%%%%

\subsection*{Acknowledgments}

%\addcontentsline{toc}{section}{\hspace{6mm}Acknowledgments}

We are grateful to A. Liashyk for fruitful discussions and I. Marshall for the careful proofreading.

%This work was supported by the Russian Science Foundation under grant no. 19-11-00062,\\ %https://rscf.ru/en/project/19-11-00062/ .

This work was performed at the Steklov International Mathematical Center and supported by the Ministry of Science and Higher Education of the Russian Federation (agreement no. 075-15-2022-265).

%%%%%%%%%%%%%%%%%%%%%%%%%%%%%%%%%%%%%%%%%%%%%%%%%%%%%%%%%%%%%%%%%%%%%%%%%%%%%%%%%%%%%%%%%%%%%%
%%%%%%%%%%%%%%%%%%%%%%%%%%%%%%%%%%%%%%%%%%%%%%%%%%%%%%%%%%%%%%%%%%%%%%%%%%%%%%%%%%%%%%%%%%%%%%

\begin{small}

\end{small}

\end{document}